\newcommand{\orcid}[1]{\href{https://orcid.org/#1}{\includegraphics[width=10pt]{ORCIDiD128.png}}}
\renewcommand{\maketag@@@}[1]{\hbox{\m@th\normalsize\normalfont#1}}%
\definecolor{mygray}{gray}{.9}
\newtheorem{lemma}{Lemma}
\newtheorem{Define}{Definition}
\newtheorem{proposition}{Proposition}
\newcolumntype{C}[1]{>{\PreserveBackslash\centering}p{#1}}
\newcolumntype{R}[1]{>{\PreserveBackslash\raggedleft}p{#1}}
\newcolumntype{L}[1]{>{\PreserveBackslash\raggedright}p{#1}}
\begin{document}
\title{Ternary Stochastic Geometry Theory for Performance Analysis of RIS-Assisted UDN}
\author{Hong-chi Lin, ~\IEEEmembership{IEEE Student~Member},
        and Qi-yue~Yu$^\dagger$, ~\IEEEmembership{IEEE Senior~Member}
		\thanks{H.-C.~Lin (email: linhongchi@stu.hit.edu.cn) and Q.-Y.~Yu (email: yuqiyue@hit.edu.cn) are with the Communication Research Center, Harbin Institute of Technology, China. 
		}
		\thanks{The work presented in this paper was supported by the National Natural Science Foundation of China under Grand No. 62071148.}
}
\markboth{IEEE TRANSACTIONS ON COMMUNICATIONS,~Vol.~XX, No.~X, May~2024}%
{Shell \MakeLowercase{\textit{et al.}}: Bare Demo of IEEEtran.cls for IEEE Journals}
\maketitle

{\color{blue}
\begin{abstract}
	Currently, network topology becomes increasingly complex with the increased number of various network nodes, bringing in the challenge of network design and analysis. 
	Most of the current studies are deduced based on the binary system stochastic geometry, overlooking the coupling and collaboration among nodes. 
	This limitation makes it difficult to accurately analyze network systems, such as reconfigurable intelligent surface (RIS) assisted ultra-dense network (UDN).
	To address this issue, we propose a dual coordinate system analysis method, by using dual observation points and their established coordinates.
	The concept of a typical triangle that consists of a base station (BS), a RIS, and a user equipment (UE) is defined as the fundamental unit of analysis for ternary stochastic geometry. 
	Furthermore, we extend Campbell's theorem and propose an approximate probability generating function for ternary stochastic geometry.
	Utilizing the theoretical framework of ternary stochastic geometry, we derive and analyze performance metrics of a RIS-assisted UDN system, such as coverage probability, area spectral efficiency, area energy efficiency, and energy coverage efficiency.
	Simulation results show that RIS can significantly enhance system performance, particularly for UEs with high signal-to-interference-plus-noise ratios, exhibiting a phenomenon similar to the Matthew effect. 
\end{abstract}}

\begin{IEEEkeywords}
	Ultra-dense network, reconfigurable intelligent surface, ternary network system, typical triangle, energy coverage efficiency.
\end{IEEEkeywords}
\IEEEpeerreviewmaketitle

\section{Introduction}

{\color{blue}
As the mobile data traffic keeps increasing at an exponential rate, there are a large number of access points with different types. 
To support diverse communication scenarios, various network architectures have been proposed, such as user-centered \cite{User_Centric}, cell-free \cite{Cell_Free_Massive_MIMO} and reconfigurable intelligent surface (RIS) assisted networks \cite{jia1}.
Evidently, different types of access points and complex network architectures bring in challenges, including resource overload, operational complexities, and difficulties in network analysis and design. 

Traditional network analysis has evolved from single-point-to-single-point models, such as classical ultra-dense network (UDN), to multipoint-to-multipoint models, such as multiple input multiple output (MIMO), heterogeneous network (HetNet) and relay. 
Nevertheless, the presence of a single type of transceiver node in traditional network analysis results in independent communication links. 
From a geometric perspective, the current network analysis primarily focuses on linear structures. 
The advantage of binary analysis lies in its ability to simplify system performance analysis by solely considering the correlation distance. 
However, this approach also imposes limitations on accurately representing the coupling, collaboration, and auxiliary communication among multiple types of nodes.
To effectively analyze complex networks, it becomes imperative to transition from  one-dimensional linear structure (binary system) to a more comprehensive  two-dimensional surface structure (ternary system).
This transition is particularly crucial in the context of RIS-assisted networks, which hold great promise for the future wireless communication \cite{Towards_Smart_and_Reconfigurable_Environment,New_Trends_SG}.

An efficient performance analysis manner for random networks is through stochastic geometry, which can provide a tight lower bound of the actual network simulation from a stochastic analysis perspective \cite{SG_tutorial}. 
Over the past decade, there has been significant growth in the analysis of wireless networks using stochastic geometry techniques \cite{Stochastic_Geometry_for_Wireless_Networks}.
In particular, the Poisson point process (PPP) has been proposed as a special point process to model the spatial distribution of transceivers in wireless networks. 
The PPP model highly accurate in describing the unique characteristics of dense networks. 
This modeling approach has provided valuable insights for network design and optimization.
Stochastic geometric approaches represent the later and more successful addition to a lineage of analytical approaches that have been developed since the inception of cellular models for wireless communication \cite{Stochastic_geometry_and_its_applications}.

Many studies have utilized stochastic geometry to investigate MIMO, heterogeneous networks, and relay networks, which are based on the classical binary systems with transceivers operating on the same resource \cite{MIMO_hetnet, SG_UAV_liuyuanwei,SG_relay_system,Multicell_MIMO_Communications_Relying,jia2,jia3,Modeling_and_Analysis_of_K_Tier,SG_multi-tier,SG_heterogeneity_jinshi}.
In the case of MIMO systems, signal processing techniques enable the equivalence of multiple parallel independent channels. This means that the communication process in MIMO does not necessitate the consideration of distances between antennas or between user equipments (UEs). As a result, MIMO systems can be perceived as multi-parallel binary systems \cite{MIMO_hetnet,jia3,Multicell_MIMO_Communications_Relying}.
Additionally, in heterogeneous networks, various types of BSs coexist, such as macro and micro BSs. 
However, the communication process in such networks always ignore the effect of distance between BSs. 
Consequently, it can be regarded as a multi-layer binary system \cite{Modeling_and_Analysis_of_K_Tier,SG_multi-tier,SG_heterogeneity_jinshi}.
For relay networks, the communication between the BS and the relay, as well as the relay and the UE, is typically performed independently in a time-sharing manner. Besides the relay selection, the relay network communication process also disregards the distance between the BS and the UE, thus enabling its interpretation as a multi-segment binary system \cite{SG_UAV_liuyuanwei,SG_relay_system,jia2}.
When it comes to collaborative analysis involving multiple function nodes, the current binary analysis framework of stochastic geometry struggles to adequately represent the system performance in a reasonable and accurate manner.

Regarding the ternary analysis model, a notable application is to analyze the RIS-assisted UDN system. 
The distinctive working principle of RIS necessitates the collaboration of both direct and reflected links to facilitate communication. Moreover, from the perspective of RIS-assisted network technology itself, the favorable characteristics of RIS (i.e., low cost, low power consumption, and easy installation) enable improvements in network capacity and coverage.
Consequently, it is required to investigate the correlation between the dual links to provide accurate network analysis \cite{New_Trends_SG}.
Numerous articles have been published on RIS-assisted UDN systems using stochastic geometry \cite{SG_RIS_CELLULAR,coverage_probability_ris_assisted,RIS_spatially,interference_RIS_intelligent,Large_Scale_Deployment_of_Intelligent_Surfaces,RIS_Assisted_Coverage_Enhancement,Hybrid_Active/Passive_Wireless,Spatial_Throughput_Characterization,wireless_network_RIS,SG_RIS_assisted,SG_fine_jinshi}.}
Initially, these studies assumed that the direct and reflected links are completely separated by different working modes \cite{coverage_probability_ris_assisted} and orthogonal time-frequency resources \cite{SG_RIS_CELLULAR}. 
In \cite{RIS_spatially} and \cite{interference_RIS_intelligent}, researchers applied the line of sight (LoS) indicator function to model the relationship between the direct and reflected links. 
The line Boolean blockages model is employed in \cite{Large_Scale_Deployment_of_Intelligent_Surfaces} to analyze the direct and reflected cases.
Subsequently, it is assumed that the direct and reflected links might be independent from a receive diversity perspective \cite{RIS_Assisted_Coverage_Enhancement}. Based on the statistical analysis approximations, the direct and reflected links are considered roughly independent \cite{Spatial_Throughput_Characterization,Hybrid_Active/Passive_Wireless}. 
In \cite{SG_RIS_assisted}, the authors simplify the system analysis by treating the BS, RIS, and UE as an equilateral triangle. The successful reflection probability and distance distribution are calculated using a line segment object model in \cite{wireless_network_RIS}. 
Lastly, \cite{SG_fine_jinshi} incorporate the reflection probability into measurement analysis using triangulation modeling.

\begin{table*}[t]
 	\begin{center}
			\label{UDN_vs_RIS}
 	      \caption{Comparisons between binary stochastic geometric  and  ternary stochastic geometric.}
 	\begin{tabular}{|c |c |c|c|}
		\toprule
		\specialrule{-0.05em}{1pt}{1pt}
		 \cline{1-4}
	\cline{1-4}
 	\multicolumn{2}{c|}{ }                                                             & \multicolumn{1}{c|}{Binary stochastic geometric}                   & \multicolumn{1}{c}{Ternary stochastic geometric}   \\ 
	 \cline{1-4}
 	\multicolumn{1}{c|}{\multirow{5}{*}{Theoretical basis}} & \multicolumn{1}{c|}{{Voronoi diagram }}                     & \multicolumn{1}{c|}{ones/two point process}                   & \multicolumn{1}{c}{ multiple point process}         \\ 
	 \cline{2-4}
 	\multicolumn{1}{c|}{}                    & \multicolumn{1}{c|}{reference point}                    & \multicolumn{1}{c|}{typical UE/BS}                  & \multicolumn{1}{c}{typical triangle}    \\ \cline{2-4}
 	\multicolumn{1}{c|}{}                    & \multicolumn{1}{c|}{Campbell's theorem}                   & \multicolumn{1}{c|}{\cite{SG_tutorial,SG_multi-tier,Stochastic_Geometry_for_Wireless_Networks,Stochastic_geometry_and_its_applications }}                    & \multicolumn{1}{c}{(\ref{Campbell})}        \\ 
	 \cline{2-4}
 	\multicolumn{1}{c|}{}                    & \multicolumn{1}{c|}{ probability generating functional }                     & \multicolumn{1}{c|}{\cite{SG_tutorial,SG_multi-tier,Stochastic_Geometry_for_Wireless_Networks,Stochastic_geometry_and_its_applications } }                    & \multicolumn{1}{c}{(\ref{PGFL})}        \\ \cline{1-4}
 	\multicolumn{1}{c|}{\multirow{1}{*}{System model}} & \multicolumn{1}{c|}{\multirow{1}{*}{signal transmission}} & \multicolumn{1}{c|}{\multirow{1}{*}{\cite{SG_tutorial,SG_multi-tier,SG_UAV_liuyuanwei,Stochastic_Geometry_for_Wireless_Networks,Stochastic_geometry_and_its_applications }}} & \multicolumn{1}{c}{(\ref{y_u}), (\ref{betaeq1delta}) }       \\ \cline{1-4}
 	\multicolumn{1}{c|}{\multirow{3}{*}{Application}} & \multicolumn{1}{c|}{signal statistics}                 & \multicolumn{1}{c|}{ \cite{SG_tutorial,SG_multi-tier,Stochastic_Geometry_for_Wireless_Networks,Stochastic_geometry_and_its_applications,wireless_network_RIS }}                    & \multicolumn{1}{c}{(\ref{x_mean}), (\ref{I_mean})}       \\ \cline{2-4}
 	\multicolumn{1}{c|}{}                    & \multicolumn{1}{c|}{cover probability/ASE/AEE}                   & \multicolumn{1}{c|}{ \cite{SG_tutorial,SG_multi-tier,SG_relay_system,Modeling_and_Analysis_of_K_Tier,SG_heterogeneity_jinshi,SG_UAV_liuyuanwei,Stochastic_Geometry_for_Wireless_Networks,Stochastic_geometry_and_its_applications,AEE }}                    & \multicolumn{1}{c}{ (\ref{coverage_hebing})/(\ref{ase_frist})/(\ref{eta_e})}       \\ \cline{2-4}  
	 \multicolumn{1}{c|}{} &  \multicolumn{1}{c|}{ECE} & \multicolumn{1}{c|}{/}                    & \multicolumn{1}{c}{ (\ref{eta_c})}       \\
	 \cline{1-4}
 	\multicolumn{1}{c|}{ \multirow{5}{*}{Examples $\&$ Categorizations}  }                                                           & \multicolumn{1}{c|}{classical UDN}         &     \multicolumn{1}{c}{\cite{SG_tutorial,Stochastic_Geometry_for_Wireless_Networks,Stochastic_geometry_and_its_applications }}     & \multicolumn{1}{c}{ /  }   \\ 
	 \cline{2-4}
	 \multicolumn{1}{c|}{ }        & \multicolumn{1}{c|}{MIMO (multi-parallel binary system)}       &      \multicolumn{1}{c|}{\cite{MIMO_hetnet,jia3,Multicell_MIMO_Communications_Relying}.}      &  \multicolumn{1}{c}{/} \\
	 \cline{2-4}
	 \multicolumn{1}{c|}{ }        & \multicolumn{1}{c|}{HetNet (multi-layer binary system)}         &     \multicolumn{1}{c|}{\cite{Modeling_and_Analysis_of_K_Tier,SG_multi-tier,SG_heterogeneity_jinshi}}     & \multicolumn{1}{c}{/} \\
	 \cline{2-4}
	 \multicolumn{1}{c|}{ }        & \multicolumn{1}{c|}{relay (multi-segment binary system)}      &       \multicolumn{1}{c|}{\cite{SG_UAV_liuyuanwei,SG_relay_system,jia2}}      & \multicolumn{1}{c}{/} \\
	 \cline{2-4}
	 \multicolumn{1}{c|}{ }        & \multicolumn{1}{c|}{RIS-assisted UDN}       &    \multicolumn{1}{c|}{\cite{SG_RIS_CELLULAR,coverage_probability_ris_assisted,RIS_spatially,interference_RIS_intelligent,Large_Scale_Deployment_of_Intelligent_Surfaces,RIS_Assisted_Coverage_Enhancement,Hybrid_Active/Passive_Wireless,Spatial_Throughput_Characterization,wireless_network_RIS,SG_RIS_assisted,SG_fine_jinshi}}        & \multicolumn{1}{c}{this paper} \\
	 \cline{1-4}
	 \bottomrule
 	\end{tabular}
 	\label{contributions}
 \end{center}
\end{table*}

{\color{blue}In summary, most of the current literature on RIS network analysis still relies on the binary system stochastic geometry, without fully considering the coupling relationships between the direct and reflected links. 
However, for multi-unit systems, it is crucial to incorporate the geometric relationships among different network units to accurately assess system performance.
To address this, we propose a ternary stochastic geometric theory for the RIS-assisted UDN system, utilizing a typical triangle to analyze the ternary network framework. 
Leveraging the typical triangle, we deduce and analyze a more comprehensive performance analysis of the RIS-assisted UDN system, as outlined in Table I.
The major contributions of this paper can be summarized as follows:
\begin{enumerate}
\item
We introduce three independent homogeneous Poisson point processes (HPPPs) to form a ternary point process, which effectively models and analyzes the RIS-assisted UDN system. 
To fully capitalize on the signal-enhancing capabilities of RIS, we optimize the RIS phase matrix to maximize the reflected signals using the maximum ratio acceptance principle.
Additionally, we conduct an analysis of the distribution of small-scale fading for the cascaded channels of the reflected signals from RISs within the system;
\item
We introduce the concept and analytical method of typical triangles by establishing a dual coordinate system using dual observation points.
Within this framework, we analyze the association relationship between BS and RIS, as well as BS and UE.
By examining the geometric relationship within the typical triangle, we further investigate the distribution of distances and angles. 
These distributions form the basis for modeling and analyzing the distribution of RIS reflection states, providing a more profound understanding of the node coupling within the RIS-assisted UDN system. 
\item
We introduce the ternary Campbell's theorem and ternary probability generating functional (PGFL) within the stochastic geometry framework to analyze the ternary network structure. 
Subsequently, we derive and analyze approximative closed formulae for signal/interference statistical characteristics, coverage probability, area spectral efficiency (ASE), and area energy efficiency (AEE) in the RIS-assisted UDN system.
Furthermore, we propose the concept of energy coverage efficiency (ECE) in the RIS-assisted UDN system and provide its derivation. 
This metric allows us to assess the efficiency of energy coverage within the system.
\end{enumerate}
}

The remaining sections of this paper are structured as follows.
Section II provides an overview of the system model, which encompasses the ternary stochastic geometry model, signal transmission model, and the design of the phase matrix of the RIS.
In Section III, we introduce a dual-coordinate analysis method and the concept of a typical triangle within the framework of ternary stochastic geometry. 
These tools are utilized for analyzing the RIS-assisted UDN system.
Section IV presents the theoretical analyses deduced from the aforementioned models and concepts.
Simulation results are presented in Section V, followed by concluding remarks drawn in Section VI.

\section{System model}
In this paper, $a$ and $\bf{A}$ respectively stand for a scalar and a matrix. 
Define the real number set and complex number set by $\mathbb{R}$ and $\mathbb{C}$, respectively.
$\mathcal{P}( \lambda  )$ represents a Poisson distribution with parameter $\lambda$, while $Nakagami (\varsigma,1)$ denotes a Nakagami distribution with parameter $\varsigma$. 
${\mathcal U}( a,b )$ indicates a uniform distribution within the range $[a,b]$.
$\mathcal{N}(\mu, \sigma^2)$ and $\mathcal{CN}(\mu, \sigma^2)$ are Gaussian distribution and complex Gaussian distribution with  mean $\mu$ and  variance $\sigma^2$, respectively. 
Let 
$\rm{E}[ \cdot ]$ and $\rm{D}[ \cdot ]$ be the mean and variance of a random variable.
Define ${\text{diag} } [a_1, a_2, \cdots, a_n] $ by a diagonal matrix whose main diagonal elements are $a_1$, $a_2$, $\cdots$, $a_n$ in order.  
The superscripts of the matrix ${\bf A}^{\rm{T}}$ and ${\bf A}^{*}$ respectively represent the transpose and conjugate of a matrix ${\bf A}$.

\subsection{Ternary stochastic geometry model}

Assume that the distributions of BSs, RISs and UEs are homogeneous Poisson point processes (HPPPs) defined by $\Lambda_N$, $\Lambda_M$ and $\Lambda_U$, with intensities $\lambda_n$, $\lambda_m$ and $\lambda_u$, respectively.
According to the point generation process of HPPP \cite{Stochastic_geometry_and_its_applications}, the number of BSs, RISs and UEs in the given area are respectively defined by $N$, $M$ and $U$, which obey independent Poisson distributions with different parameters.
However, the locations of points obey uniform distribution in the given area, where the area can be modeled as a circle with radius $R$. 
For example, the number of points $N$ in $\Lambda_N$ obeys Poisson distribution $\mathcal{P}(\lambda_n \pi R^2)$ and the locations of points are generated in the polar coordinate system by a ``radial generation" method, i.e., the probability density function (PDF) of polar diameter is $ f ( x ) = {2 x /}{R^2}$ $(0 \le x \le R)$, and the polar angle $\theta$ obeys uniform distribution on $[0, 2 \pi)$.
Assume the placement angle of RIS $\kappa_m$ obeys uniform distribution on $[0, 2\pi)$.

In stochastic geometry, the voronoi diagram is frequently used for describing cell distribution.
Through the vertical intersection of two nearby points, it divides the region into several voronoi cells.
There are three types of nodes for the RIS-assisted UDN: BS nodes, RIS nodes, and UE nodes. As a result, three different voronoi diagram types are taken into account.
An illustration of the voronoi diagram for a RIS-assisted UDN system is shown in Fig. \ref{voronoi}, where $N=15$, $M=20$, and $U=20$.
It is found that the voronoi diagrams of the BS, RIS and UE may exhibit overlapping regions, which further complicates the analysis of multiple voronoi diagrams.
\begin{figure}[t]
    \centering
	\includegraphics[width=7cm,height=6cm]{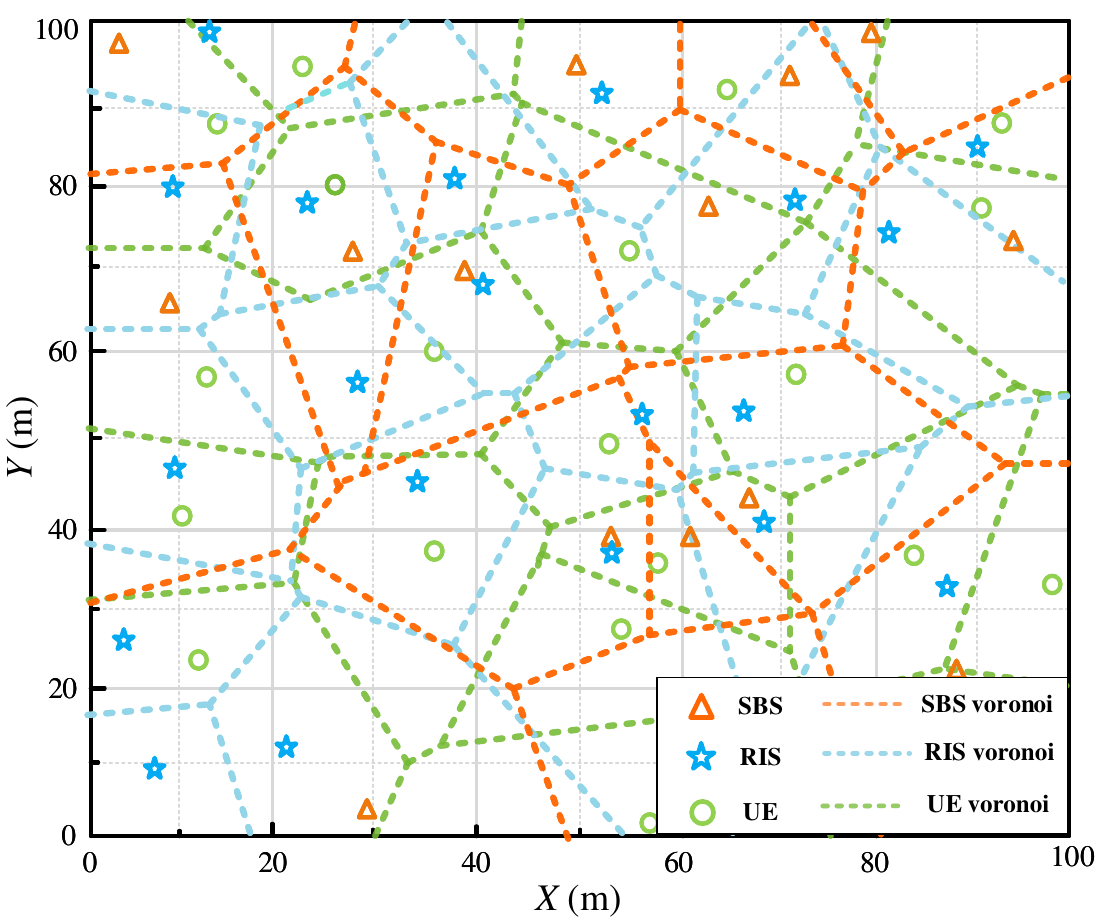}	
	\caption{The voronoi diagram of a RIS-assisted UDN, where $N=15$, $M=20$ and $U=20$.}
	\label{voronoi}
\end{figure}

{\color{blue} 
To facilitate a more comprehensive analysis, it is essential to examine the association rules among the RIS, the UE, and the BS.
In general, a UE tends to connect to the BS that offers the highest signal power.
Consequently, when all the BSs have the same transmit power, the UE may select the nearest BS as its serving BS.
In other words, the UE is within the voronoi cell of its serving BS \cite{SG_multi-tier}.
As $\Lambda_N$ and $\Lambda_U$ are independent of each other, there may exist the scenario, where no UE is inside the voronoi cell of a BS. 
In this case, the BS may choose to enter a dormant state to mitigate network interference.
The RIS employs a similar strategy by selecting the nearest BS to provide its services.
This operation enables the efficient transmission of control signaling between the BS and the RIS through the wired link. 
Equivalently, this framework also aligns perfectly with the concept of advocating the BS to exercise control over the RISs within its voronoi cell.
This arrangement can enable the BS to enhance the UE performance by managing and optimizing the operations of the RISs.
}

\subsection{Signal transmission model}

This paper considers the narrow-band communication downlink scenario, which ignores the effect of transmission delay. 
{\color{blue} To facilitate the analysis of the topology performance, it is assumed that each equipment terminal (e.g., BS, and UE) holds one antenna, and each RIS has $Q$ reflection elements.}
In addition, the RIS is assumed to reflect electromagnetic waves by only one side with the tunable phase, without considering the tunable amplitude and refraction function (transmitting-RIS) \cite{coverage_probability_ris_assisted,SG_RIS_CELLULAR,wireless_network_RIS,Large_Scale_Deployment_of_Intelligent_Surfaces}. 
{\color{blue} Without loss of generality, it is assumed that one BS, one RIS and one UE constitute the basic unit in the ternary system, for comparison with the binary system of the classical UDN \cite{coverage_probability_ris_assisted,SG_RIS_CELLULAR,Spatial_Throughput_Characterization,Hybrid_Active/Passive_Wireless}. }
Taking the $n$th BS as an example, it is assumed that the $u$th UE is served by the $n$th BS, and the set of RISs serving the $n$th BS is defined by $\Psi_n^M$.
Let the modulated signals of the $u$th UE be $x_{u}$, where $ {\rm E}\left[|x_{u}|^2\right] = 1$. 
Then, the transmit signal of the $n$th BS is expressed as $s_n =   \sqrt{P_{tr}}  x_{u}$,
where $ P_{tr}$ is the transmit power of the $n$th BS. 
In this paper, we assume all BSs have the same transmit power. 

The channel impulse response between the $n$th BS and the $u$th UE is denoted as $h_{n,u}$, the channel impulse response between the $n$th BS and the $m$th RIS is set to be ${\bf{w}}_{n,m}\!  =\!  [w_{n,m,1}, w_{n,m,2}, \cdots, w_{n,m,Q}]^T \!  \in\!   \mathbb{C}^{Q\times 1}$, and the channel impulse response between the $m$th RIS and the $u$th UE is ${\bf{g}}_{m,u} \! =\!  [g_{m,u,1}, g_{m,u,2}, \cdots, g_{m,u,Q}]$ $\in  \mathbb{C}^{1 \times Q}$. 
Moreover, the reflection matrix of the $m$th RIS is defined by ${\bf{\Phi}}_m \triangleq \text{diag} \left[ {\phi_{m,1}}, {\phi_{m,2}}, \cdots, {\phi_{m,Q}} \right]$, where $\phi_{m,q}$ is the controllable phase shift of the $q$th reflection element.
The distance between the $n$th BS and the $u$th UE is defined by $d_{n,u}^{D}$, the distance between the $n$th BS and the $m$th RIS is defined by $d_{n,m}^{I}$, and the distance between the $m$th RIS and the $u$th UE is $d_{m,u}^{R}$, where the subscripts ``$D$", ``$I$", and ``$R$" indicate ``direct", ``incident" and ``reflect", respectively. 
\begin{figure*}[t]
	\centering
	\includegraphics[height=4.9cm]{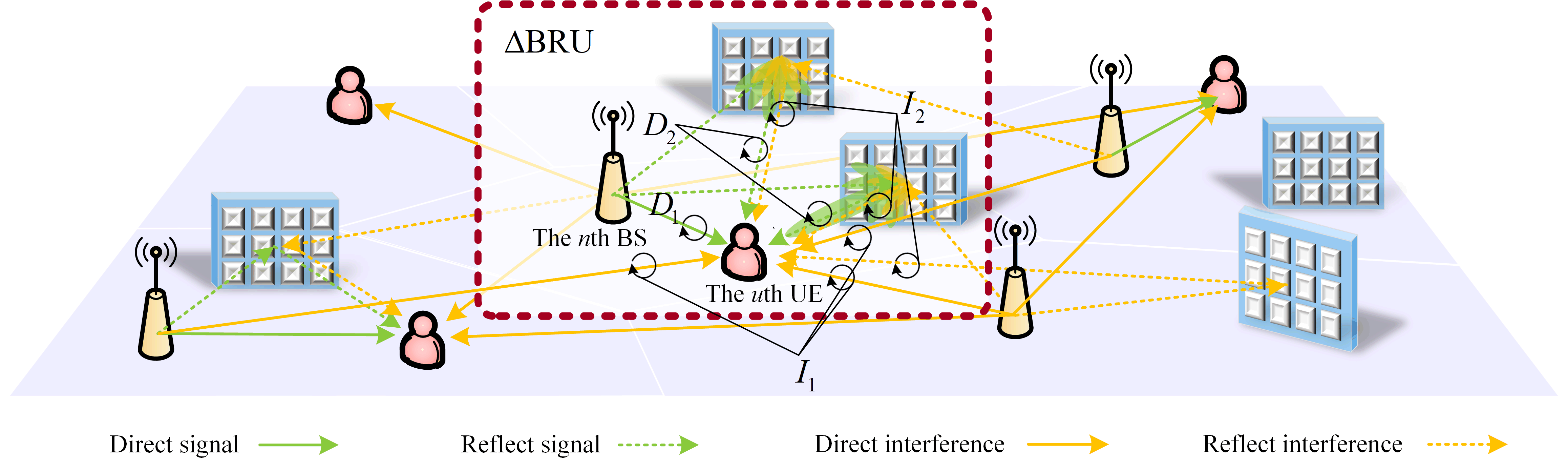}
	\caption{Signal transmission diagram of RIS-assisted UDN, where $D_1$,$ D_2$, $I_1$ and $I_2$ respectively stand for the direct signal, the reflected signal, the direct interference and the reflection interference during the communication between the $u$th UE and the $n$th BS.}
	\label{system}
\end{figure*}

{\color{blue} In this paper, the small-scale fading channels are modeled as the Nakagami distribution $Nakagami(\varsigma,1)$, and the large-scale fading channel only considers the path loss, i.e., $\textstyle (1+d)^{-\alpha}$, where $\alpha$ is the path loss coefficient and $d$ indicates the signal transmission distance.}
Therefore, the received signal of the $u$th UE is
	\begin{equation}
		{\small
			\begin{aligned}
				\label{y_u}
			   y_u \!= \!&\!\! \sum_{n=1}^{N} \! {{ \!\frac{h_{n,u} s_n}{(1 \!+ \!d_{n,u}^{D} \!)^{\frac{\alpha}{2} }} \!}}  \!
					+ \! \! \!\!\sum_{m=1}^{M} \! \! \frac{ {\bf{g}}_{m,u} {\bf{\Phi}}_m }{(1 \!+ \!d_{m,u}^{R} \!)^{\frac{\alpha}{2} } }\! \! \!
					   \!\left( \sum_{n=1}^{N} \!{ { \!\frac{{\bf{w}}_{n,m}  \beta_{n,m, u}  s_n}{(1 \!+ \!d_{n,m}^{I} \!)^{ \frac{\alpha}{2} }}\!}} \!\cdot \! \!\right) \!\!+ \!\!z_u\\
					   \!=  & \underbrace{ {{ \!\frac{h_{n,u}  s_n}{(1 \!\!+ \!d_{n,u}^{D}\! )^{\frac{\alpha}{2}}\!} }} }_{D_1} 
					    + \!\!  \underbrace{\sum_{m=1}^M    \! \frac{ {\bf{g}}_{m,u} {\bf \Phi}_m {\bf w}_{n,m}   \beta_{n,m, u}    s_n}{ (1 \!+\! d_{m,u}^{R} \! )^{ \frac{\alpha}{2} }\!  (1 \!+ \!d_{n,m}^{I} \! )^{\frac{\alpha}{2} }} \!}_{D_2} \!\\
					 & + \!\!\!\! \underbrace{\sum_{ \substack{n'\! \neq n,\\ n'\!=1}}^{N}  \!\!\!\!\frac{ h_{n'\!,u}  s_{n'\!}}{ (1 \!\!+ \!d_{n'\!,u}^{D} \! )^{ \frac{\alpha}{2} }}\!}_{I_1} 
					+  \!\! \!\underbrace{\sum_{m=1}^{M}\!  \sum_{\substack{n'\! \neq n,\\ n'\!=1}}^{N} \!\!\!\frac{{\bf{g}}_{m , u}\! {\bf \Phi}_m {\bf w}_{n'\!,m} \beta_{n'\!,m, u}  s_{n'\!}}{ (1 \!+ \!d_{m,u}^{R} \! )^{ \frac{\alpha}{2} } (1 \!+ \!d_{n'\!,m}^{I}\!  )^{  \frac{\alpha}{2} } }
					    \!}_{I_2}    +  \!z_u,
				\end{aligned}}
	\end{equation}%
where $z_u \sim \mathcal{N}(0,\sigma_n^2)$ represents the additive white Gaussian noise (AWGN) of the $u$th UE.
The parameter $\beta_{n,m, u} \! \in\!  \{0,1\}$ denotes the reflection state of the reflected signals from the $n$th BS through the $m$th RIS to the $u$th UE.
If $\beta_{n,m, u} \! \!=\!\!  1$, it indicates that the $m$th RIS can effectively reflect the signals from the $n$th BS to the $u$th UE. In other words, the RIS is capable of assisting in the transmission for the $u$th UE. On the contrary, if $\beta_{n,m, u} \! \!=\!\!  0$, it suggests an interrupted link, implying that the RIS does not contribute to the signal transmission for the $u$th UE.
The first term in equation (\ref{y_u}), referred to as $D_1$, represents the direct signal from the $n$th BS to the $u$th UE. The second term, denoted as $D_2$, accounts for the reflected signals from the RISs received by the $u$th UE. Additionally, the third and fourth terms respectively represent the interference originating from other BSs and RISs, which are defined as $I_1$ and $I_2$, as illustrated in Fig. \ref{system}.

For further discussion, the cascade channel impulse response from the $n$th BS through the $m$th RIS to the $u$th UE is defined by $h_{n,m,u} = {\bf {g}}_{m,u} {\bf \Phi}_m {\bf{w}}_{n,m} $.
Consequently, the signal to interference plus noise ratio (SINR) of the $u$th UE $\gamma_u$ is given by 
\begin{equation}
	{\small
	\label{SINR}
		\begin{aligned}
			\gamma_u \! 
			    = \!\!\frac{ P_{tr}\left|  \frac{ h_{n,u} }{(1 + d_{n,u}^{D}  )^{\frac{\alpha}{2} } }    + \sum\limits_{m=1}^M 
			   \frac{  h_{n,m,u}    \beta_{n,m, u}}{ \left(1 + d_{m,u}^{R} \right)^{ \frac{\alpha}{2}}  \left(1 + d_{n,m}^{I} \right)^{ \frac{\alpha}{2}} }  \right|^2}
	   { P_{tr}\!\!\sum\limits_{\substack{n'\! \neq n, \\  n'\!=1}}^{N}   \!\!\frac{|h_{n'\!,u}|^2}{ (1 \!+ \!d_{n'\!,u}^{D} \! )^{ \alpha }} \! +\! P_{tr}\!\!\sum\limits_{m=1}^{M}  
				    \!\sum\limits_{\substack{n'\! \neq n,\\ n'\!=1}}^{N}  \!\! \frac{|h_{n'\!,m,u}|^2    \beta_{n'\!,m,u}^2 } { \left(1 \!+ \!d_{m,u}^{R}\right)^{ \alpha }(1\! + \!d_{n'\!,m}^{I}\!  ) ^{ \alpha }}\!   \! \!+ \! \sigma_n^2 }
		\end{aligned},}
\end{equation}%
where the cascade channel impulse response $h_{n,m,u} $ has a significant impact on $D_2$ and $I_2$, especially for the gain of combining signals of the UE.
Therefore, we can design the phase matrix  ${\bf{\Phi}}_m$ of RIS, to directly control the cascade channel to the maximum modulus state.

\subsection{Design the phase matrix of RIS}
{\color{blue} In general, it is difficult to obtain channel state information of other BSs, especially for RIS-assisted systems, which have too much pilot overhead and too high latency. 
}
Since this paper only considers the phase tuning capability of RIS, the phase matrixes of RISs are designed to maximize the power of signals.
Thus, the controllable phase shift of the $q$th reflection element of the $m$th RIS is 
	\begin{equation}
		{\small
\label{phi_q}
	\phi_{m,q} \!= \!\! 
		 \frac{ {h_{n,u}}}{| h_{n,u}|} \!\cdot \!\frac{ {w^*_{n,m,q}}}{| w_{n,m,q}|} \!\cdot\! \frac{{g^*_{m,u,q}}}{|g_{m,u,q}|}, ~1 \!\le\! q \!\le\! Q,~m \in \Psi_{n}^{M},}
\end{equation}%
where $\phi_{m,q}$ is determined by the product of three parts, i.e., the phase of $h_{n,u}$, $w^*_{n,m,q}$ and $g^*_{m,u,q}$. 

Consequently, the cascade channel gain from the $n$th BS through the $m$th RIS to the $u$th UE is expressed as
\begin{equation}
	{\small
		\left|h_{n,m,u} \right| =  \left\{
	\begin{aligned}
		&\sum_{q=1}^Q   |g_{m,u,q}|  \cdot |w_{n,m,q}|,~ m \in \Psi_{n}^{M},\\
		&\left|\sum_{q=1}^Q   g_{m,u,q} \cdot w_{n,m,q} \right|, ~~m \notin \Psi_{n}^{M} ,
	\end{aligned}\right. }
\end{equation}%
where $m \notin \Psi_{n}^{M}$ indicates that the $m$th RIS does not serve the $n$th BS.
{\color{blue}If $m \in \Psi_{n}^{M}$, the phase matrix ${\bf \Phi}_m$ is matched with the channel impulse response ${\bf w}_{n,m}$ and ${\bf g}_{m,u}$, and the reflected signals can be aligned. 
According to the central limit theorem, the PDF of $ |h_{n,m,u} | $ is approximately  a Gaussian distribution, i.e.,  $\textstyle \mathcal{N}\left( \frac{Q \Gamma^2( \varsigma+\frac{1}{2})}{\varsigma \Gamma^2(\varsigma)}, Q(1-\frac{\Gamma^4(\varsigma +\frac{1}{2})}{\varsigma^2\Gamma^4(\varsigma)})  \right)$. 
Furthermore, the phase of the cascaded channel $h_{n,m,u}$ should align with the phase of its direct link $h_{n,u}$, and thus the cascaded channel $h_{n,m,u}$ follows a uniform distribution $\mathcal{U} (0, 2\pi)$ \cite{nakagami}.
In contrast, if $m \notin \Psi_{n}^{M}$, the reflected signals become random variables to the $u$th UE, so $ h_{n,m,u}$ is approximately a complex Gaussian distribution  $\textstyle \mathcal{CN}\left( 0 , \frac{Q}{2} \right)$. 
}


\section{Ternary stochastic geometric modeling for RIS-assisted UDN}

A typical point (BS or UE) is generally {\color{blue} specified in} the classical stochastic geometry theory for evaluating the performance of the entire network, since the Slivnyak Theorem has proved that the statistical characteristics of HPPP are independent of the location of the observation point \cite{Stochastic_Geometry_for_Wireless_Networks}.
Due to the independence of its communication links, the traditional point-to-point communication only concerns the relative distance between the transmitter and receiver. 
Nevertheless, the RIS-assisted UDN system is a ternary system, which consists of three points, namely one BS, one RIS, and one UE, forming a triangle $\triangle {\rm BRU}$ geometric relationship, as shown in Fig. \ref{ternary_introduction}.
To represent the geometry relationships of the three points, this section proposes the typical triangle as the ternary stochastic geometric basis.

\begin{figure}[t]
    \centering
	\includegraphics[height=5cm]{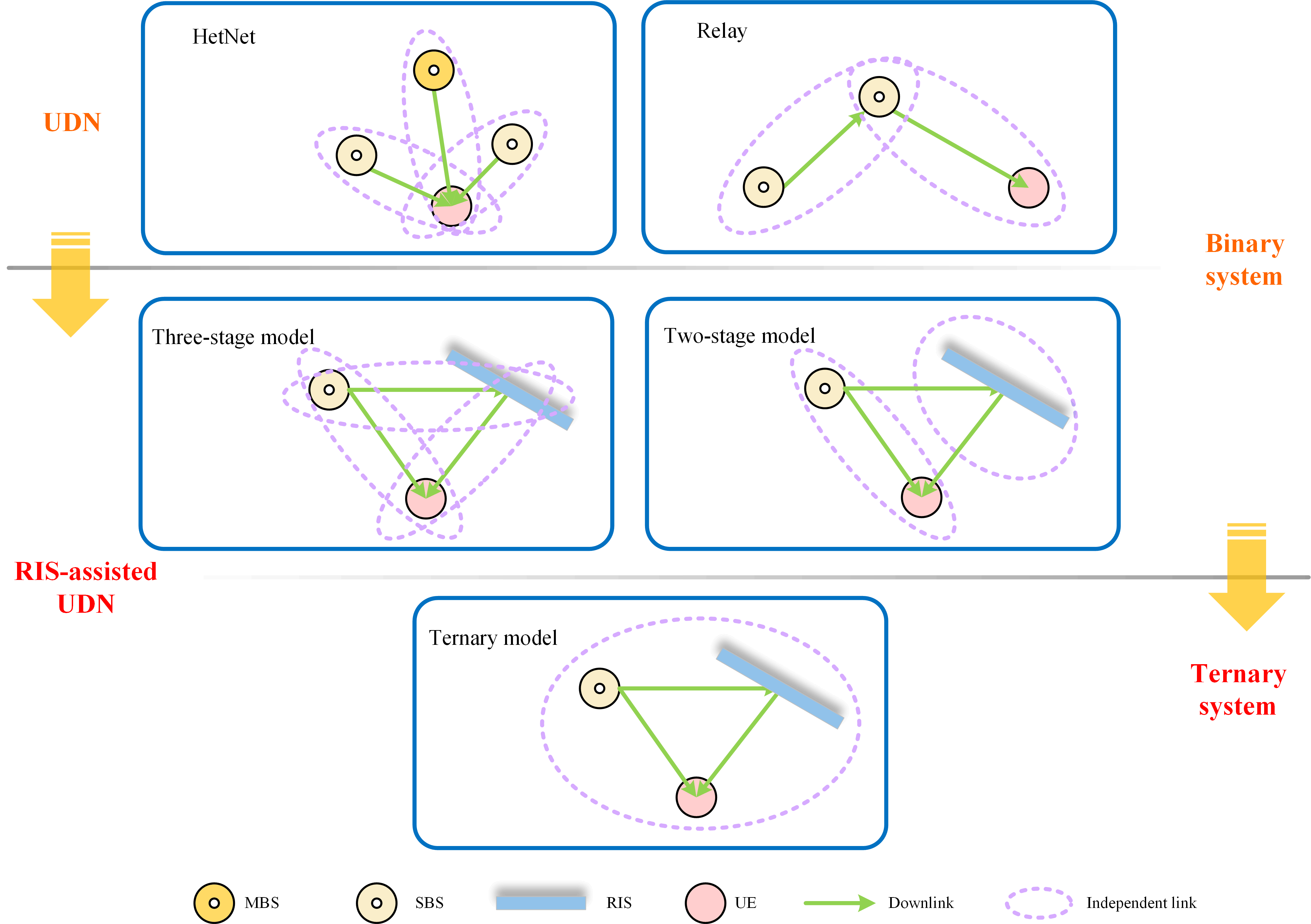}	
	\caption{Ternary systems differ from binary systems in previous literature in terms of unit modeling.}
	\label{ternary_introduction}
\end{figure}

{\color{blue}

\subsection{Dual-coordinate analysis method}


\begin{figure}[t]
    \centering
	\includegraphics[width=7cm]{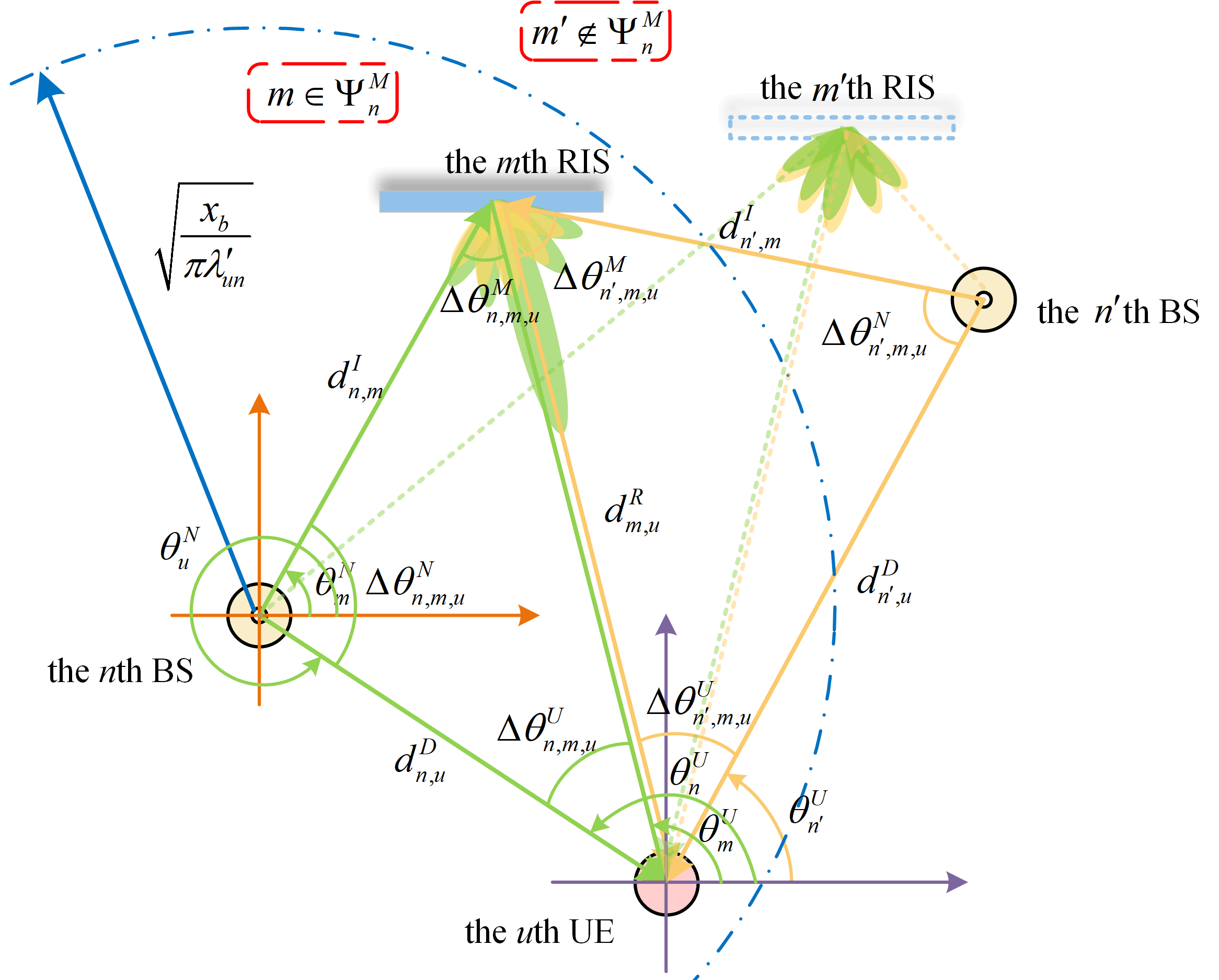}
	\caption{The geometric relationship of the $\triangle {\rm BRU}$, where the $n$th BS acts as the central point of a circle with a cell radius of $\sqrt{{x_b}/({\pi\lambda'_{un}})}$, effectively partitioning the RIS into two distinct regions -  $m\in \Psi_m^M$ (the interior of the cell) and $m \notin \Psi_m^M$ (the exterior of the cell).}
	\label{Geometrical_Relationship}
\end{figure}

As mentioned earlier, analyzing the RIS-assisted UDN system involves studying the geometric relationship of the $\triangle {\rm BRU}$.
Hence, it is difficult to determine the positional relationship of all points in the 3HPPP using a single typical point.
We need at least two points: one is the typical UE and the other is the typical BS, selected based on their connectivity relationship. 
As show in Fig. \ref{Geometrical_Relationship}, each typical point can establish a coordinate system to locate the positions of all points in the 3HPPP, thereby employing a dual coordinate system analysis method. 
To demonstrate the validity of the dual-coordinate analysis method, we present Proposition \ref{pro1}.

\begin{proposition}[Coexistence of observation points]\label{pro1}
    In the RIS-UDN system, when the association relationships among HPPPs are fixed, the PDFs of the correlation distances and angles remain consistent, regardless of the observation point from which they are viewed.
\end{proposition}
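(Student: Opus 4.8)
The plan is to reduce the statement to the stationarity and isotropy of the underlying HPPPs together with Slivnyak's theorem, which is already invoked in the excerpt. First I would observe that each of $\Lambda_N$, $\Lambda_M$ and $\Lambda_U$ is, by construction, both stationary (its law is invariant under translations of $\mathbb{R}^2$) and isotropic (invariant under rotations about any fixed centre), since the points are placed uniformly with Poisson-distributed counts. Because the three processes are mutually independent, their superposition inherits both invariances.

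Second, I would make precise what an ``observation point'' is: choosing the typical UE (resp. the typical BS) as observation point amounts to passing to the reduced Palm distribution of $\Lambda_U$ (resp. $\Lambda_N$) and placing that point at the origin of its coordinate frame. By Slivnyak's theorem the reduced Palm distribution of a PPP coincides with the original distribution, so conditioning on a point of one process at the origin leaves the law of all the remaining points unchanged. This is the ternary analogue of the single-point argument already used in the excerpt.

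Third, I would express the correlation distances $d_{n,u}^{D}$, $d_{n,m}^{I}$, $d_{m,u}^{R}$ and the interior angles of $\triangle {\rm BRU}$ as functions of the relative position vectors of the three vertices. These quantities are rigid-motion invariants: they are unchanged by any translation and any rotation of the plane. Switching the observation point from the typical UE to the typical BS is precisely a change of coordinate frame, namely a translation moving one vertex to the origin composed with a rotation fixing the reference axis. Hence the random variables $(d_{n,u}^{D}, d_{n,m}^{I}, d_{m,u}^{R}, \text{angles})$ are the same measurable functions of the configuration in either frame, and by the stationarity and isotropy established above their joint law is invariant under this frame change.

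The step I expect to be the main obstacle is handling the conditioning imposed by the fixed association rule. The nearest-BS association that ties the UE and the RIS to a common serving BS is not automatic under Slivnyak; it is a conditioning event, and one must verify that it respects the same symmetry. I would argue that this event depends on the configuration only through inter-point distances (the UE or RIS lies in the Voronoi cell of the chosen BS if and only if that BS is the nearest, a statement phrased purely in relative distances), so it is itself a rigid-motion invariant. Conditioning on a rigid-motion-invariant event preserves the translation and rotation invariance of the conditional law, which closes the argument and shows the PDFs of the distances and angles coincide in the UE-centred and BS-centred frames.
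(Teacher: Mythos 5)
Your proposal is correct and rests on the same three pillars as the paper's own proof: stationarity and isotropy of the HPPPs (invariance of $\Lambda(S)$ under translation $S_{+(x_b,y_b)}$ and rotation $S_{\curvearrowright \theta_m}$), and Slivnyak's theorem to add observation points without disturbing the law of the process, so the core route is essentially identical. Where you go beyond the paper is your final step: the paper's proof never explicitly handles the hypothesis ``when the association relationships among HPPPs are fixed,'' whereas you correctly flag that the nearest-BS association is a conditioning event and verify that it is itself a rigid-motion invariant (being expressible purely in relative distances), so that conditioning preserves the translation and rotation invariance. That extra verification is a genuine completion of the argument rather than a different approach, and it makes your version tighter than the one in the paper, which implicitly relies on exactly this fact when it later uses the dual UE-centred and BS-centred frames.
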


\begin{proof}
	Since HPPP has stationarity and isotropy \cite{Stochastic_geometry_and_its_applications,Stochastic_Geometry_for_Wireless_Networks}, it can be obtained that
		\begin{equation}
			{\small			{\rm P}[ \Lambda(S)\!=\!n]  = {\rm P}[ \Lambda(S_{+(x_b,y_b)})\!=\!n]  = {\rm P}[ \Lambda(S_{\curvearrowright \theta_m})\!=\!n] ,}
		\end{equation}%
	where ${\Lambda}(B)$ is the number of points of $\Lambda$ in the set $B$. 
	Similarly to the notations in \cite{Stochastic_geometry_and_its_applications}, $S_{+(x_b,y_b)} \triangleq \{(x,y)+(x_b,y_b), (x,y) \in S\}  $ and $S_{\curvearrowright \theta_m} \triangleq \{ (x,y)\cdot e^{-j\theta_m},(x,y) \in S \}$ denote the translation of the set $S$ by $(x_b, y_b)$ and the clockwise rotation $\theta_m$, respectively. 

	Moreover, in accordance with the Slivnyak Theorem, the typical point does not affect the distribution of the HPPP.
	Therefore, the coexistence of typical points can be proven by
		\begin{equation}
			{\small
			\begin{aligned}
				{\rm P}[ S \in Y || (0,0)] = & {\rm P}[ S_{+(x_b,y_b)} \in Y || (0,0)] \\
				= & {\rm P}[ S \in Y || (0,0), (x_b,y_b)] \\
				= & {\rm P}[ S \cup \{ (0,0)\} \cup  \{ (x_b,y_b)\} \in Y ],
			\end{aligned}}
		\end{equation}%
	where ${\rm P}[S\in Y||(x,y)]$ denotes the probability that $S$ has property $Y$ from the observation point $(x,y)$.

\end{proof}

In the following analysis, the geometric relationship and association relationship of each point in the 3HPPPs will be examined using two coordinate systems established by a typical UE and a typical BS. 
This method allows for a comprehensive understanding of the system's geometry and association characteristics.

\subsubsection{Analysis of association relationship}

To facilitate the subsequent analysis, we first define a typical $\triangle {\rm BRU}$, as depicted in Definition \ref{define}.
\begin{Define}\label{define}
	A triangle, denoted as $\triangle {\rm BRU}$, is consisted of a BS, a RIS, and a UE. 
	Additionally, a triangle with at least one typical point (BS or UE) is defined as a typical $\triangle {\rm BRU}$.
\end{Define}

In this paper, we initially establish a coordinate system using the typical UE as the origin. 
This facilitates the localization of other BSs for the analysis of direct signals and interference. 
The typical UE selects the nearest BS to serve it, which is regarded as the typical BS. 
Subsequently, we use the typical BS as the origin to conveniently locate the RIS for the analysis of reflected signals and interference.
Specific symbolic notations and definitions in the $\triangle {\rm BRU}$ can be shown in Table \ref{notations} and Fig. \ref{Geometrical_Relationship}.
}

\begin{table}[t]
	{\color{blue}
	\begin{center}
		  \caption{List of notations in $\triangle {\rm BRU}$}
	\begin{tabular}{|c |c|}
	   \toprule
	   \specialrule{-0.05em}{1pt}{1pt}
		\cline{1-2}
   \cline{1-2}
	 \multicolumn{1}{c}{ Notation }                   & \multicolumn{1}{c}{Description}   \\ 
	\cline{1-2}
	\multicolumn{1}{c}{$\Delta \theta^i_{n,m,u}$}                   & \multicolumn{1}{c}{ the angle in  $\triangle _{n,m,u}$ with point $i$ as vertex}         \\ 
	\multicolumn{1}{c}{$\Psi_n^M$}                   & \multicolumn{1}{c}{ the set of RIS-assisted the $n$th BS}         \\ 
	\multicolumn{1}{c}{  \multirow{1}{*}{$\theta^I_j$}  }                    & \multicolumn{1}{c}{ polar angle: the $j$th equipment when $I$ is at the origin\footnotemark[1]}         \\ 
	\multicolumn{1}{c}{$\kappa_m$}                   & \multicolumn{1}{c}{ the placement angle of the $m$th RIS}         \\ 
	\multicolumn{1}{c}{$d^D_{n,u}$}                   & \multicolumn{1}{c}{ distance: the $n$th BS and the $u$th UE}         \\ 
	\multicolumn{1}{c}{$d^I_{n,m}$}                   & \multicolumn{1}{c}{ distance: the $n$th BS and the $m$th RIS}         \\ 
	\multicolumn{1}{c}{$d^R_{m,u}$}                   & \multicolumn{1}{c}{ distance: the $m$th RIS and the $u$th UE}         \\ 
	\bottomrule
	\end{tabular}
	\label{notations}
\end{center}
\footnotemark[1]{$I \in \{N,U\}$}, where $N$ and $U$ denote a typical BS and a typical UE, respectively.
}
\end{table}

The PDF of the area of voronoi cell is given as 
	\begin{equation}
		{\small
		\label{size_voronoi}
		 f_{X_{\mathcal{B}}}(x_b)=\frac{3.5^{3.5}}{\Gamma(3.5)} x^{2.5}_b e^{-3.5x_b},}
	\end{equation}
where $x_b$ is the random variable that reflects the normalized voronoi cell area of BS \cite{size_distribution_poisson_Voronoi_cells}.
According to \eqref{size_voronoi}, the probability of at least one UE located in the BS voronoi cell is 
	\begin{equation}
		{\small
			\begin{aligned}
				\label{uge0}	
			{\rm{P}} \left[  u \neq 0  \right] = 1-(1 + \frac{\lambda_u}{3.5 \lambda_n})^{-3.5}
			,
			\end{aligned}}
	\end{equation}
where ${\rm{P}} \left[  u \neq 0  \right]$ is the probability of an active BS \cite{Base_Station_Density_Voronoi_cells}.
According to the thining theorem of HPPP \cite{Stochastic_Geometry_for_Wireless_Networks,Stochastic_geometry_and_its_applications}, the intensity of active BS is $\lambda'_{un} \triangleq {\rm{P}} \left[  u \neq 0  \right] \lambda_n$. 
{\color{blue} 
This is a critical conclusion, as it allows for the analysis of system performance in two HPPPs $\Lambda_N$ and $\Lambda_U$ by using a typical UE and an active BS HPPP $\Lambda'_{UN}$.
}

{\color{blue} So far, the analysis based on the typical UE remains consistent with the original binary stochastic geometry. 
For the RIS-assisted UDN system, the RISs can be categorized into two types based on their distance relationship with the reference BS: those inside the reference BS voronoi cell ($m \notin \Psi^M_n$) and those outside the cell ($m \notin \Psi^M_n$).
RISs inside the voronoi cell have the capability to perform beamforming, while those outside the cell cannot align with the signals.
This division results in the Poisson point process  $\Lambda_M$ being split into two parts.
Furthermore, if the $m$th RIS is within the voronoi cell of the $n$th BS, it means that the distance from the $m$th RIS to the $n$th BS is less than the distance to the other BSs, i.e., $\Psi^M_n=\{ \forall d^I_{n,m} \le d^I_{n',m}, n' \in \Lambda_N/\{n\} \}$.
It is worth noting that we cannot refine the analysis by the thining RIS intensity process for the RIS-assisted UDN system, because of the associated distance $d^I_{n,m}$.
Therefore, to accurately characterize the association relationships of RISs, we introduce a coordinate system centered around a typical BS.

Based on the  Gilbert disk (Boolean) model described in \cite{Stochastic_Geometry_for_Wireless_Networks}, the probability of the $m$th RIS being located in the typical voronoi cell can be expressed as 
	\begin{equation}
		{\small
		\begin{aligned}
			 {\rm P}\Big[  d^I_{n,m} \!\le \!\!\sqrt{\frac{x_b}{\pi \lambda'_{un}}}  \Big|    d^I_{n,m} \Big]
			 \!\!=  & \!\!\!\int_{\!\pi\lambda'_{un} \!(d^I_{n,m})^2 }^{\infty} \!\!\frac{3.5^{3.5}}{\Gamma(3.5)}x_b^{2.5} e^{-3.5 x_b}  {\text d} x_b\\
			 \!= & 0.3\Gamma(3.5, 3.5 \pi\lambda'_{un} (d^I_{n,m})^2 )\\
			 \! \triangleq & p(\lambda'_{un}, d^I_{n,m}),
		\end{aligned}}
	\end{equation}%
where $\Gamma(a,x) \triangleq \int_x^{\infty} t^{a-1} e^{-t} {\text d} t $ is the incomplete gamma function. 
Although it is impossible to prove ${\rm P}[\Psi^M_n | d^I_{n,m} ] \Leftrightarrow {\rm P}[ d^I_{n,m} \le \sqrt{x_b/(\pi \lambda'_{un})} | d^I_{n,m} ]$ in the Gilbert disk analysis, numerical simulations have shown a perfect fit between the two. 
This observation further supports the validity and reasonableness of the analysis conducted.
}

\subsubsection{Analysis of geometric relationship in typical $\triangle {\rm BRU}$}
To unify the analysis, the typical UE is defined as the first UE and the typical BS is defined as the first BS.
From the perspective of the typical UE (located at the origin), the PDF of $d_{1,1}^{D}$  is derived as follows
	\begin{equation}
		{\small
		f\left({d_{1,1}^{D}}\right)= 2 \pi \lambda'_{un} {d_{1,1}^{D}} \cdot e^{-\pi \lambda'_{un} \left({d_{1,1}^{D}}\right)^2}.
		\label{dn_11}}
	\end{equation}

In addition, we can also find that the typical UE’s vertex angle $\textstyle \Delta \theta_{1,m,1}^{U}=\theta^{U}_m-\theta_{u_1}^{U}$  obeys the uniform distribution in the range of $[0, 2\pi)$, since the polar angle $\theta^{N_M}_m$ and $\theta_{1}^{N_U}$ are independent identically distributed (i.i.d.) and both obey the uniform distribution on $[0, 2\pi)$.
According to the law of cosines, it can be obtained that
	\begin{equation*}
		{\small
		\textstyle d_{1,m}^{I} = \sqrt{(d_{1,1}^{D})^2 + (d_{m,1}^{R})^2 - 2d_{1,1}^{D} d_{m,1}^{R} \text{cos}( \Delta \theta_{1,m,1}^{U})},}
	\end{equation*}%
where the PDF of $d^{R}_{m,1}$ also obeys the polar diameter $f(d^{R}_{m,1})=2 d^{R}_{m,1} /R^2, (0\le  d^{R}_{m,1}  \le R)$.


\subsection{Mathematic model of the reflection state of a RIS}\label{section2_1}

The $\triangle {\rm BRU}$ consists of three transmission links (or called sides): from BS to RIS $(B \to R)$, from BS to UE  $(B \to U)$, and from RIS to UE $(R \to U)$.
The interrupted link is generated by the placement angle $\kappa_m$ of the RIS, because of the one-side reflection.
Let $\beta_{n,m, u} \in \{0,1\}$ be the reflection state of the RIS. 
If $\beta_{n,m, u} = 1$, the $m$th RIS can reflect the signals from the $n$th BS to the $u$th UE.
Conversely, if the link is interrupted, it indicates that the RIS is not functioning, i.e., $\beta_{n,m, u} = 0$.
Thus, there are generally four cases: full connection, the interrupted link of ${B \nrightarrow R}$, the interrupted link of ${R \nrightarrow U}$, and the interrupted link of ${B \nrightarrow R \nrightarrow U}$, as depicted in Fig. \ref{RIS_model}.
\begin{figure}[t]
	\centering
	\subfigure[~$\beta_{n,m, u} = 1$]
	{
		\includegraphics[height=1.9cm,width=3.7cm]{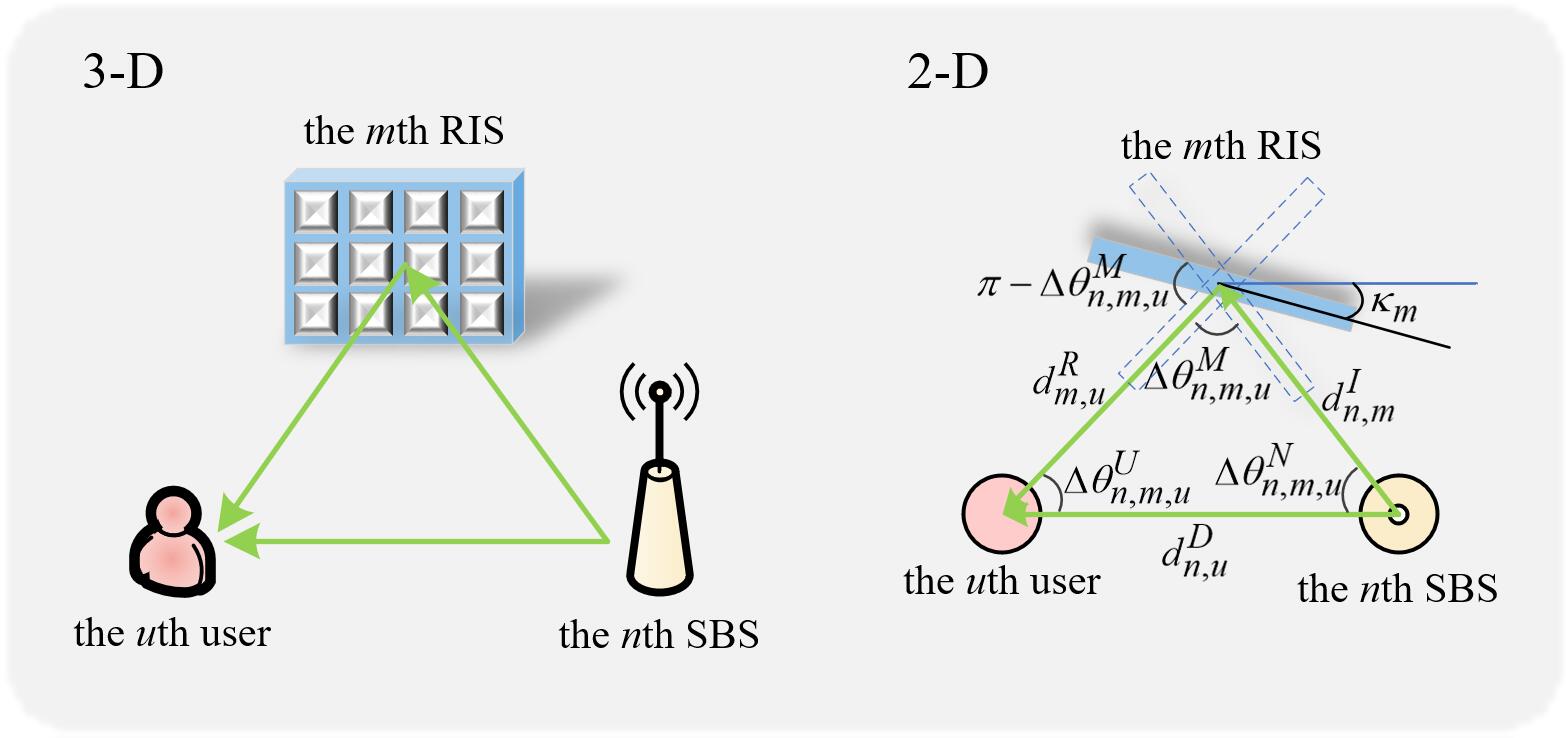}
	}
	\quad
	\subfigure[~$\beta_{n,m, u} = 0$ and ${B \nrightarrow R}$]
	{
		\includegraphics[height=1.9cm,width=3.7cm]{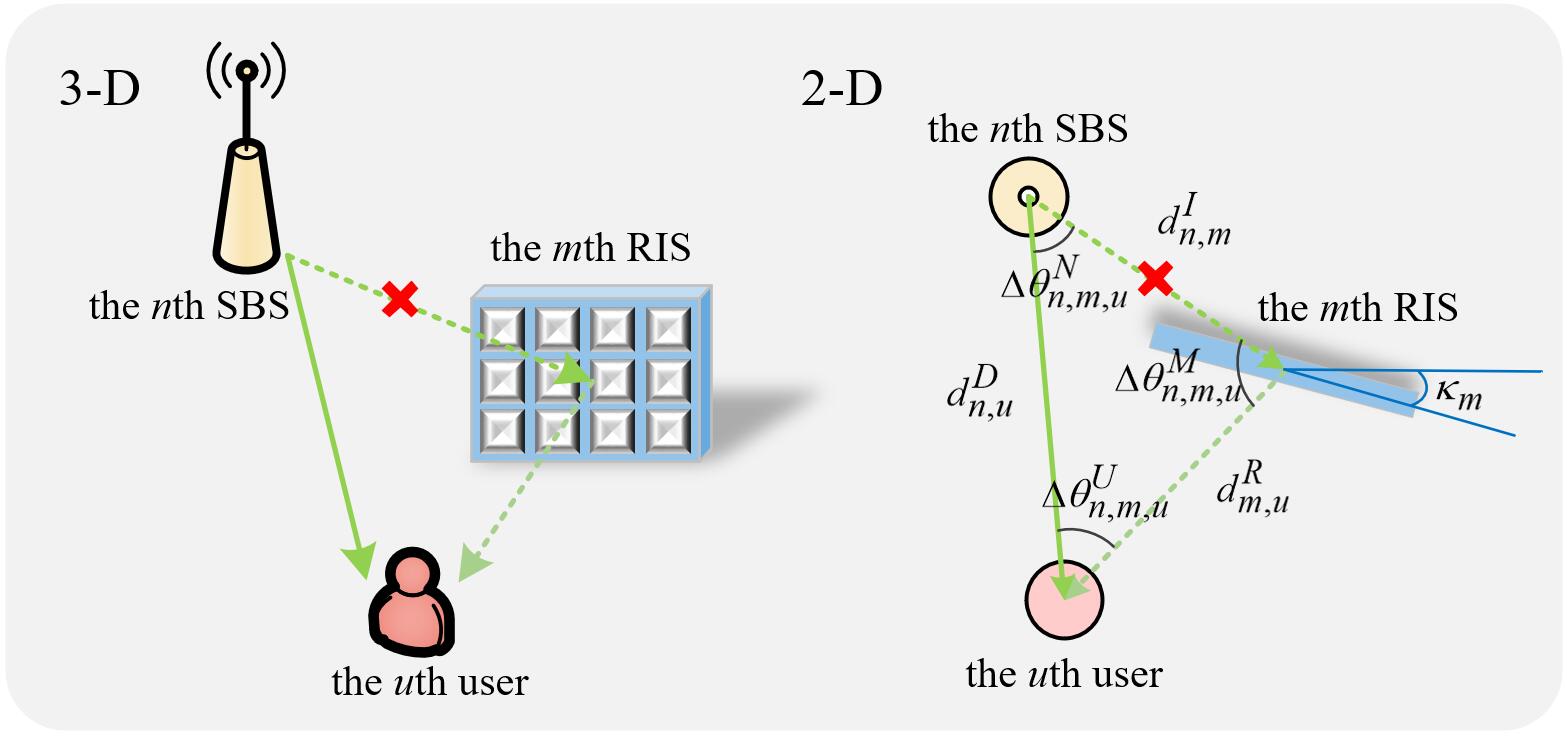}
	}
	\quad
	\subfigure[~$\beta_{n,m, u} = 0$ and ${R \nrightarrow U}$]
	{
		\includegraphics[height=1.9cm,width=3.7cm]{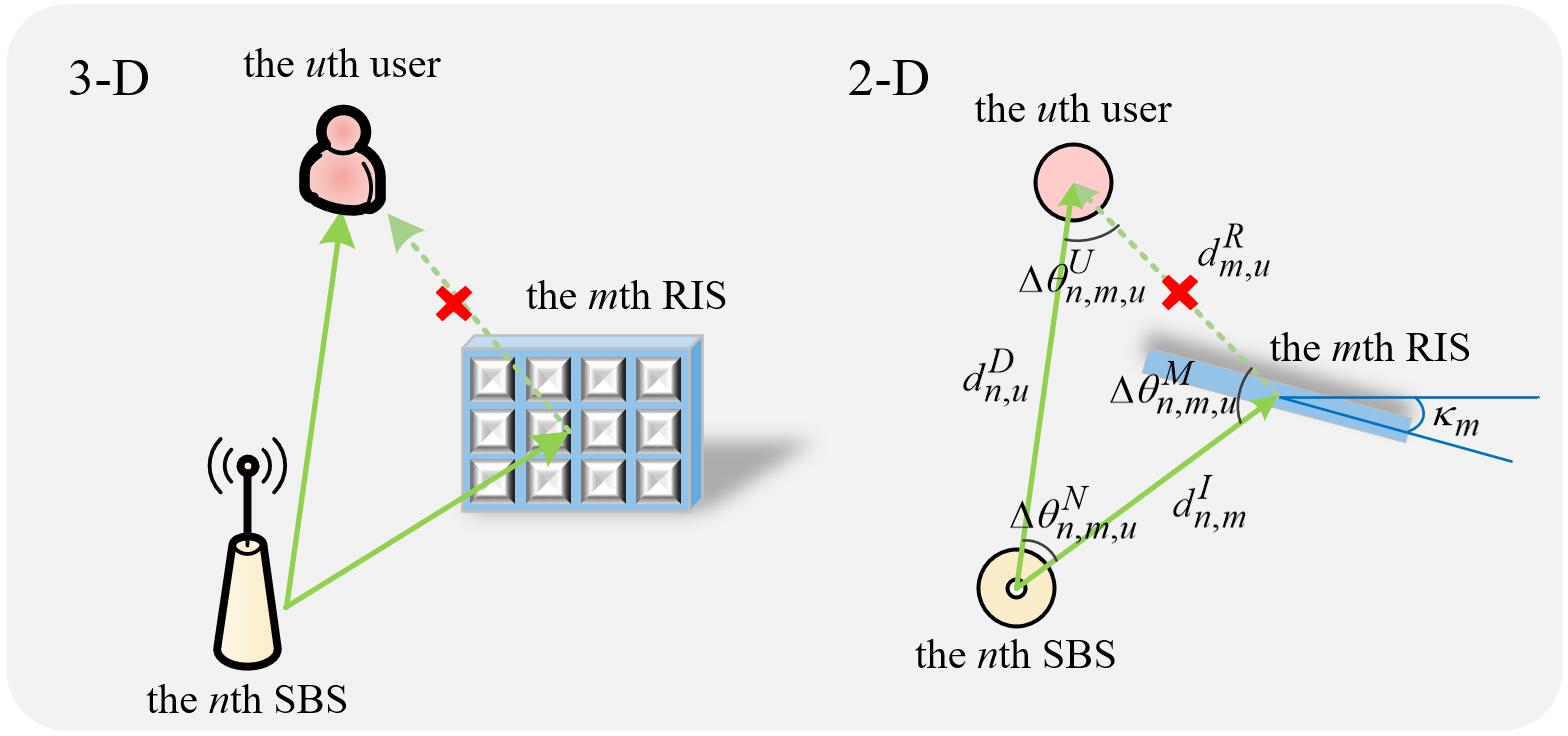}
	}
	\quad
	\subfigure[~$\beta_{n,m, u} = 0$, ${B \nrightarrow R}$ and ${R \nrightarrow U}$]
	{
		\includegraphics[height=1.9cm,width=3.7cm]{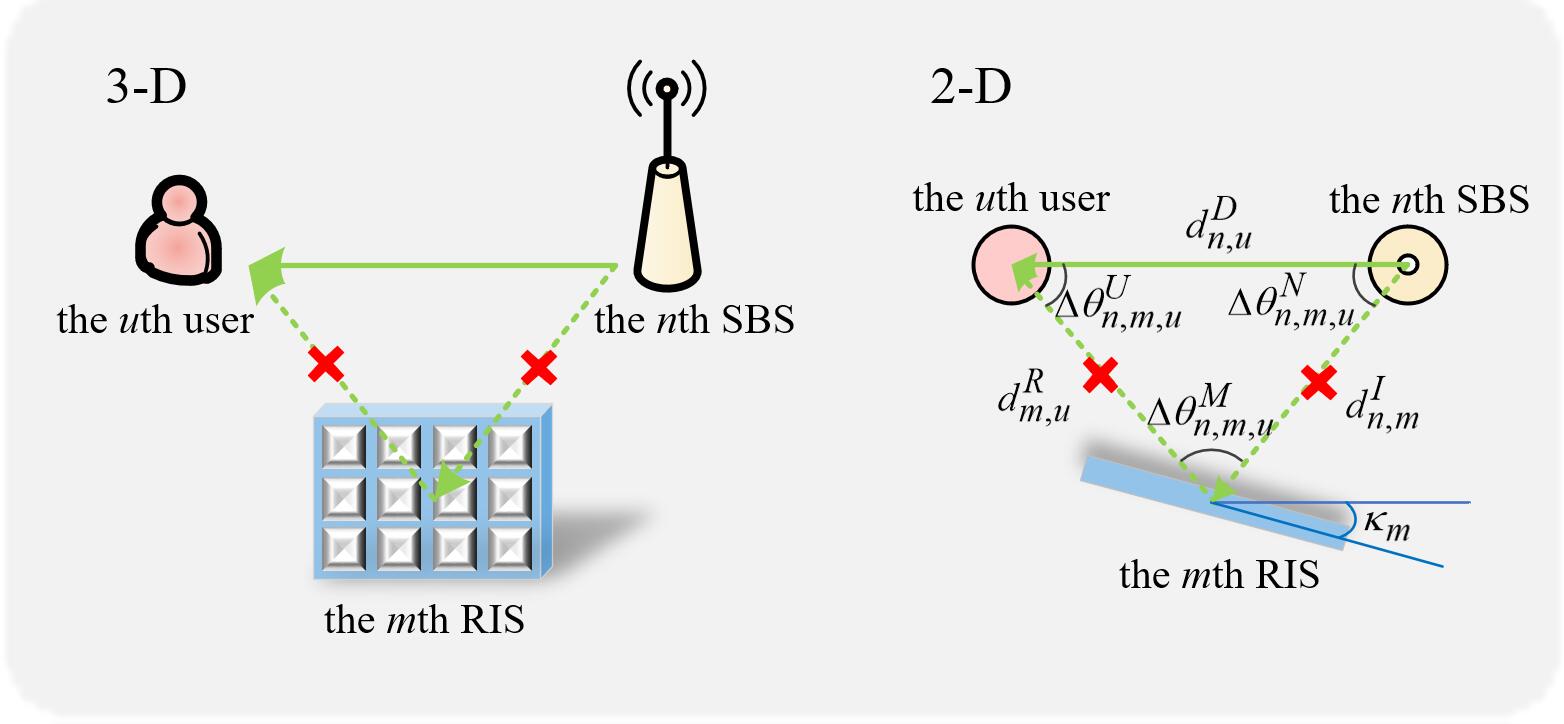}
	}
	\caption{Diagram of four reflection cases in the network, where  ``solid line" represents the transmission link and ``dashed line" represents the interrupted link. (a) full connection; (b) the interrupted link, i.e., ${B \nrightarrow R}$; (c) the interrupted link, i.e., ${R \nrightarrow U}$; (d) the interrupted link, i.e., ${B \nrightarrow R \nrightarrow U}$.}
	\label{RIS_model}
\end{figure}

According to the geometric relationship of $\triangle {\rm BRU}$, the successful reflection placement angle range is from $- \theta^{N_M}_m$ to $\pi - \Delta \theta^M_{n,m,u} -  \theta^{N_M}_m$, as shown in Fig. \ref{RIS_model}.
Consequently, the reflection state $\beta_{n,m, u}$ of the $m$th RIS is expressed as
	\begin{equation}
		{\small
		\beta_{n,m, u} = \left\{
		\begin{aligned}
		1,~~~~~ &    - \theta^{N_M}_m \le  \kappa_m \le  \pi - \Delta \theta^M_{n,m,u} -  \theta^{N_M}_m, \\
		0,~~~~~ & ~~~~~~~~~~~~~~  \text{others},
		\end{aligned}
		\right.  }
		\end{equation}%
where $ \theta^{N_M}_m$ is the polar angle of the $m$th RIS and $\Delta \theta^M_{n,m,u}$ is the $m$th RIS's vertex angle.

For a given angle $\Delta \theta^M_{n,m,u}$, the conditional probability of $\beta_{n,m, u}=1$ is calculated as
	\begin{equation}
		{\small
		\begin{aligned}
			\label{betaeq1delta} 
			&{\rm{P}}\left[ \beta_{n,m, u}=1 \Big| \Delta \theta^M_{n,m,u} \right]
		=   \frac{\pi - \Delta \theta^M_{n,m,u}}{2\pi} \le \frac{1}{2},
		\end{aligned}}
	\end{equation}%
where $\Delta \theta^M_{n,m,u} \!\!=\!\! {\rm{arccos}} \Big[  \frac{ ({d_{m,u}^{R}})^2 + ( {d_{n,m}^{I}} )^2 - ({d_{n,u}^{D}})^2}{2 {d_{m,u}^{R}} {d_{n,m}^{I}} }    \Big]$.
Considering the triangular geometric relationships, it becomes apparent that the RIS reflection state is less than $1/2$. 
This finding provides a slightly more accurate assessment compared to previous literature that assumed a half RIS operation \cite{SG_RIS_CELLULAR,coverage_probability_ris_assisted,RIS_spatially,interference_RIS_intelligent}.

\section{Ternary Stochastic geometric analysis for RIS-assisted UDN}
In the previous section, we thoroughly examined and discussed the association relationships, edges, angles, and reflection probabilities within the RIS-assisted UDN system. 
In this section, we present the ternary stochastic geometry theory, which is then used to analyze the system performance in the RIS-assisted UDN system.

\subsection{Ternary stochastic geometry theory}

Stochastic geometry can transform the mean form of random sums (Campbell's theorem) and random products (PGFL analysis) of point processes into integral form, so that it can simplify the computation and ensuring analytical tractability \cite{Stochastic_geometry_and_its_applications}.
However, the RIS-assisted UDN system exhibits a ternary network topology, the conventional Campbell's theorem and PGFL analysis cannot be directly applied. Therefore, we introduce the ternary stochastic geometry theory.
To begin with, we present Lemma \ref{lemma_1} and Lemma \ref{lemma_2} to derive the mean of the random sum/product of ternary stochastic geometry.

{\color{blue}
\begin{lemma}[Campbell's theorem of ternary stochastic geometry]\label{lemma_1}
	Two HPPPs $\Lambda_N$ and $\Lambda_M$ respectively with intensity $\lambda_n$ and $\lambda_m$ are two independent point processes on the two-dimensional plane $\mathbb{R}^2$. 
	Within this context, $f(x,y)$ is a measurable function on $\mathbb{R}^2$. 
	Thus, the mean value of the random sum can be derived as
		\begin{equation}
			{\small
			\begin{aligned}	
				  {\rm E} \left[  \sum_{ n  }^N \sum_m^M f(x_n,y_m)  \right] \overset{(a)}{=}& {\rm E} \left[ \sum_{n}^N 2 \pi \lambda_m\int f(x,y)y {\text d}y \right]\\
				  \overset{(b)}{=} & 4\pi^2\lambda_n\lambda_m\iint f(x,y) x y {\text d}x {\text d}y,
			\end{aligned}\label{Campbell}
			}
		\end{equation}%
		where $(a)$ and $(b)$ are because of the Campbell's theorem in the binary stochastic geometry.
\end{lemma}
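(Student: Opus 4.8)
The plan is to obtain both equalities by iterating the ordinary (binary) Campbell's theorem twice, peeling off one point process at a time and exploiting the independence of $\Lambda_N$ and $\Lambda_M$. The starting observation is that, because the two processes are independent, the double sum can be handled by the tower rule: I would first condition on the realization of $\Lambda_N$ and evaluate the expectation over $\Lambda_M$ of the inner sum, then take the outer expectation over $\Lambda_N$. Each of these two expectations is exactly a one-process Campbell average, so no genuinely new probabilistic machinery is needed beyond the binary theorem already invoked in the paper.

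For step $(a)$, I would fix a point $x_n$ of $\Lambda_N$ and regard $g_{x_n}(\cdot) \triangleq f(x_n, \cdot)$ as a measurable function of the single argument running over $\Lambda_M$. Applying the binary Campbell's theorem to $\Lambda_M$ gives ${\rm E}_{\Lambda_M}[\sum_m g_{x_n}(y_m)] = \lambda_m \int_{\mathbb{R}^2} f(x_n, y)\, {\rm d}y$. Writing this planar integral in the polar coordinates of the radial generation model and integrating out the uniform angular variable contributes the factor $2\pi$, leaving $2\pi \lambda_m \int f(x_n, y)\, y\, {\rm d}y$, where $y$ now denotes the polar diameter. Summing back over the points of $\Lambda_N$ and taking the outer expectation reproduces the middle expression in (\ref{Campbell}).

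Step $(b)$ is then a second application of the same theorem, now to $\Lambda_N$. Setting $h(x) \triangleq 2\pi \lambda_m \int f(x, y)\, y\, {\rm d}y$, Campbell's theorem yields ${\rm E}_{\Lambda_N}[\sum_n h(x_n)] = \lambda_n \int_{\mathbb{R}^2} h(x)\, {\rm d}x = 2\pi \lambda_n \int h(x)\, x\, {\rm d}x$, and substituting the definition of $h$ collapses the two nested integrals into $4\pi^2 \lambda_n \lambda_m \iint f(x,y)\, x y\, {\rm d}x\, {\rm d}y$, which is the claimed right-hand side.

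The main obstacle I anticipate is justifying the two interchanges rather than the algebra: I must verify that independence legitimately turns the joint expectation into nested conditional expectations, and that Tonelli/Fubini applies so the order of summation, expectation and integration can be swapped — which holds once $f$ is assumed nonnegative or absolutely integrable against the product intensity measure. A secondary point to state explicitly is that the clean $xy\,{\rm d}x\,{\rm d}y$ form relies on $f$ depending on the points only through their polar diameters $x$ and $y$, so that each angular integration simply yields a factor of $2\pi$; this is precisely the regime of the distance-based functionals used later in the paper.
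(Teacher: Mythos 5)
Your proposal is correct and follows essentially the same route as the paper: the paper's own justification is precisely two nested applications of the binary Campbell's theorem (step $(a)$ over $\Lambda_M$ conditionally on $\Lambda_N$, step $(b)$ over $\Lambda_N$), with the polar-coordinate form supplying the $2\pi$ factors and the $x\,y\,{\rm d}x\,{\rm d}y$ measure. Your added remarks on Tonelli/Fubini and on $f$ depending only on the polar diameters make explicit what the paper leaves implicit, but they do not change the argument.
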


Due to the nested function property of Campbell's theorem, it is only necessary to compute Campbell's Theorem twice to obtain the result stated in Lemma 1.
However, in the context of PGFL analysis, the PGFL analysis does not possess the nesting property found in the Campbell's theorem.
Consequently, it is particularly challenging to analyze the PGFL in ternary stochastic geometry. 
To address this difficulty and maintain the symmetry of the PGFL, we provide an approximate PGFL analysis in ternary stochastic geometry.

\begin{lemma}[approximate PGFL analysis of ternary stochastic geometry]\label{lemma_2}
	Two HPPPs $\Lambda_N$ and $\Lambda_M$ respectively with intensity $\lambda_n$ and $\lambda_m$ are two independent point processes on the two-dimensional plane $\mathbb{R}^2$. 
	Within this context, $f(x,y)$ is a measurable function on $\mathbb{R}^2$. 
	Thus, the mean value of the random product can be approximately derived as
    \begin{equation}
		{\small
		\begin{aligned}
			 & {\rm E} \left[ \prod_n^N \prod_m^M \!f(x_n,y_m) \!\right]  =  {\rm E} \left[  \left( f(x_n,y_m) \right)^{NM}  \right]\\
			 \overset{(a)}{=} & {\rm E} \left[   {\rm exp} \Big(  \!\!-2\pi\lambda_m\int(1 - (f(x_n,y_m))^N)y{\text d}y  \Big)  \right]\\
			 \overset{(b)}{\approx} & {\rm exp} \Big(  \!\!-2\pi\lambda_m \int(1 - {\rm E} \left[  (f(x_n,y_m))^N \right])y{\text d}y  \Big)\\
			\overset{(c)}{=} & {\rm exp} \Big(  \!\!-2\pi\lambda_m \!\int\!(1 \!- \! {\rm exp} (-\!2\pi \lambda_n\!\int\! (1\!-\!f(x,y))x{\text d}x)  )y{\text d}y  \Big)\\
			\overset{(d)}{\approx} & {\rm exp} \Bigg(\!\!-\!4\pi^2 \lambda_n \lambda_m \!\iint \!\!( 1\!-\! f(x,y))xy {\text d}x {\text d}y \Bigg),
		\end{aligned}}\label{PGFL}
	\end{equation}%
	where $(a)$ and $(c)$ are based on the PGFL analysis in binary stochastic geometry.
	$(b)$ is derived by exchanging the calculated order of the means.
	$(d)$ is obtained through the equivalent infinitesimal substitution of $1-{\rm exp}(-x) \approx x$.
\end{lemma}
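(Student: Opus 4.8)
The plan is to establish the four labeled relations by repeatedly invoking the binary PGFL, once for each point process, and then linearizing the resulting nested exponential. First I would recall the only analytic tool needed for steps $(a)$ and $(c)$: the PGFL of a single HPPP. For an HPPP $\Phi$ with intensity $\lambda$ on $\mathbb{R}^2$ and a measurable test function $v(\cdot)$, one has ${\rm E}[\prod_{x\in\Phi} v(x)] = \exp(-\lambda\int_{\mathbb{R}^2}(1-v(x))\,{\text d}x)$, and for a radially-dependent integrand the planar integral collapses to the polar form $2\pi\int(1-v(y))y\,{\text d}y$. The initial identity ${\rm E}[\prod_n^N\prod_m^M f(x_n,y_m)]={\rm E}[(f(x_n,y_m))^{NM}]$ is read purely as the author's shorthand for the double product over all $N\times M$ point pairs.

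For step $(a)$ I would first reorganize the double product by collecting the inner product over the BS points, $\prod_m\prod_n f(x_n,y_m)=\prod_m g(y_m)$ with $g(y_m)\triangleq\prod_n f(x_n,y_m)$ (written $(f(x_n,y_m))^N$ in the statement), and then apply the binary PGFL to the $\Lambda_M$ process conditioned on a realization of $\Lambda_N$. This yields the inner exponential $\exp(-2\pi\lambda_m\int(1-(f(x_n,y))^N)y\,{\text d}y)$ while the outer expectation over $\Lambda_N$ is retained. Step $(c)$ is the symmetric move: the conditional expectation ${\rm E}_N[(f(x_n,y))^N]$ is itself a PGFL over $\Lambda_N$, equal to $\exp(-2\pi\lambda_n\int(1-f(x,y))x\,{\text d}x)$, which I substitute to produce the doubly-nested exponential.

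Step $(d)$ is then a direct linearization. Since the inner integral is the intensity-weighted ``hole'' term and is small in the dense-network regime of interest, I apply the equivalent infinitesimal $1-\exp(-u)\approx u$ with $u=2\pi\lambda_n\int(1-f(x,y))x\,{\text d}x$, which collapses the two nested exponentials into the single exponential of the double integral $-4\pi^2\lambda_n\lambda_m\iint(1-f(x,y))xy\,{\text d}x\,{\text d}y$, matching the target. I would note as a consistency check that this final form is symmetric under interchange of $(\lambda_n,x)$ and $(\lambda_m,y)$, which is precisely the property the approximation was designed to preserve, whereas the pre-$(d)$ nested expression is not.

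The hard part will be step $(b)$, the interchange ${\rm E}[\exp(-\int\cdots)]\approx\exp(-\int{\rm E}[\cdots])$, since this is where the genuine approximation enters. The map $u\mapsto e^{-u}$ is convex, so by Jensen's inequality the left side is in fact a lower bound for the right side, and the swap is exact only when the inner integral $\int(1-(f(x_n,y))^N)y\,{\text d}y$ is conditionally deterministic. I would justify the approximation by arguing that, over the $\Lambda_N$ randomness, this integral concentrates about its mean, with fluctuations of $O(1/\sqrt{\lambda_n})$ order in the dense regime, so that higher cumulants of the exponent are negligible and pulling the expectation inside incurs only a controlled error. This step, together with the $1-e^{-u}\approx u$ linearization in $(d)$, is exactly why the result is stated as an approximation rather than an identity, and both are the places where I would concentrate any rigorous error analysis.
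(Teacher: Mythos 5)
Your derivation follows the same route as the paper's: the paper's own ``proof'' is exactly the annotated chain $(a)$--$(d)$ embedded in the lemma statement --- condition on $\Lambda_N$ and apply the binary PGFL to $\Lambda_M$ for $(a)$, pass the expectation inside the exponent for $(b)$, apply the binary PGFL to $\Lambda_N$ for $(c)$, and linearize $1-e^{-u}\approx u$ for $(d)$ --- and your proposal reproduces each of these steps, including the observation that symmetry in the two HPPPs is what the final form is designed to preserve, which matches the paper's own discussion following the lemma.

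However, your Jensen's inequality remark for step $(b)$ has the direction reversed. Writing $U = 2\pi\lambda_m\int\bigl(1-(f(x_n,y))^N\bigr)y\,{\text d}y$, convexity of $u\mapsto e^{-u}$ gives ${\rm E}\bigl[e^{-U}\bigr] \ge e^{-{\rm E}[U]}$, and since ${\rm E}[U]$ equals the post-swap exponent by linearity (Fubini), the pre-swap expression of step $(a)$ is an \emph{upper} bound for the post-swap expression of step $(b)$ --- not a lower bound as you claim. Equivalently, the approximation $(b)$, and hence the lemma's final formula, \emph{underestimates} the true mean product; this is in fact consonant with the paper's framing of stochastic-geometry results as tight lower bounds on actual network performance. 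The error sits in a side remark and does not break the main chain (which is an acknowledged approximation in any case), but any rigorous error analysis built on the inequality as you stated it would bound the wrong side, so you should flip it before concentrating your effort there.
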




In fact, Eq. $(c)$ of \eqref{PGFL} can also be viewed as an expression for the approximate PGFL analysis in ternary stochastic geometry. 
However, there are two reasons why it may not be an ideal choice: on the one hand, it lacks symmetry with respect to the two HPPPs; on the other hand, the nested function in Eq. $(c)$ is highly complex and does not facilitate subsequent analysis.}

\subsection{Statistical characteristics of signal and interference}

Based on Lemma \ref{lemma_1}, we derive the statistical characteristics of the signals and interference.
{\color{blue} In Appendix A, the detailed deduction processes for the power of the entire signal $D_{sum} = D_1 + D_2$ and the entire interference $I_{sum} = I_1 + I_2$ have been deduced.}
The resulting expressions for these quantities are respectively expressed as
	\begin{subequations}
		{\small
		\begin{align}
			{\rm{E}}\! \left[ \left|D_{sum} \right|^2  \right] 
			\!\!=\! & P_{tr}  \!\!\!\int_0^{\infty}  \!\! \left[ \! \frac{ 2\pi  \lambda_m  Q(  Q \!-\!\! 1 ) \Gamma^4(\varsigma\!+\!\! \frac{1}{2})}{\varsigma^2 \Gamma^4(\varsigma)}   \mathcal Q_{p}(\alpha, d^D_{1,1}) \right.  \notag \\
		& \!+\!\!  \frac{4 \pi^2 \lambda_m^2 Q^2\Gamma^4(\varsigma\!+\!\frac{1}{2})}{\varsigma^2 \Gamma^4(\varsigma)}   \mathcal Q_{p}^2(\frac{\alpha}{2}, d^D_{1,1})  \notag \\
		&\!+\!\! \frac{2\pi\lambda_m Q \Gamma^3(\varsigma\!+\! \frac{1}{2})}{\varsigma^{\frac{3}{2}} \Gamma^3(\varsigma)(1+ d^D_{1,1} )^{\frac{\alpha}{2}}} \mathcal Q_{p}(\frac{\alpha}{2}, d^D_{1,1}) \label{x_mean}  \\
		& \left. \! \!+\! \!\frac{ 1 }{(1\!+\!\! d^D_{1,1})^{\alpha}} \!\!+ \!\!2 \pi \lambda_m Q  \mathcal Q_{1}(\alpha, d^D_{1,1}) \right]\!\! f(d^D_{1,1})  {\text d} d^D_{1,1}\notag \\
		{\rm{E}} \!\left[ \left|I_{sum} \right|^2 \right] 
			\!= & 2 \pi      \lambda_{un}'P_{tr}\!\! \left[ \frac{1}{6} \!-\! \mathcal{F}\!\left( \alpha , \pi \lambda_{un}' \right) 
			\!+\! 2  \pi \lambda_m Q \!\!  \int_0^{\infty} \!\!\!\!   \int_{d_{1,1}^D}^{\infty}  \right. \notag\\
			& \!\!     \mathcal Q_{1}(\alpha, d^D_{1,1})  \left.  d_{n',1}^{D} {\text d} d_{n',1}^{D} f\left( d_{1,1}^D \right) {\text d}  d_{1,1}^D \right. \bigg],\label{I_mean}
		\end{align}}
	\end{subequations}%
where $\textstyle {\mathcal{F}}(a, k) \triangleq \int_0^{\infty} \frac{x}{(1+x)^{a}} e^{-k x^2} {\text d}x$, ${\mathcal{G}}(x) \triangleq \int_0^{\pi} \frac{{\rm P}(\beta_{n,m,1} = 1 | \Delta \theta^M_{n,m,1} ) }{\pi} x {\text d} \Delta \theta$, and 
\begin{equation}
		{\small
		\mathcal Q_{i}(\alpha, d^D_{1,1}) \! \triangleq \!\! \left\{
	    \begin{aligned}
		& \! \! \int^{\infty}_{0} \! \! \!\! \!{\mathcal{G}} \!\!\left(\!\! \frac{1 }{(\!1 \!\!+\!\! d_{m,1}^{R} \!)^{ \alpha }  (\!1 \!\!+\!\! d_{1,m}^{I} \!)^{\alpha }}  \!\!\right)\!\!  d_{1,m}^{I}  {\text d}  d_{1,m}^{I}	,i\!\!=\!\!1,\\
		& \!\!\int^{\infty}_{0} \!\!\!\!\!{\mathcal{G}} \!\!\left(\!\! \frac{p(\lambda'_{un}, d^I_{1,m})}{(\!1 \!\!+\!\! d_{m,1}^{R} \!)^{ \alpha }  (\!1 \!\!+\!\! d_{1,m}^{I} \!)^{\alpha }}  \!\!\right) \!\! d_{1,m}^{I}  {\text d}  d_{1,m}^{I}	,i\!\!=\!\!p,\\
		& \!\!\int^{\infty}_{0}\!\!\!\!\!\! {\mathcal{G}} \!\!\left(\!\! \frac{1-p(\lambda'_{un}, d^I_{1,m})}{(\!1 \!\!+\!\! d_{m,1}^{R} \!)^{ \alpha }  (\!1 \!\!+\!\! d_{1,m}^{I} \!)^{\alpha }}  \!\!\right)\!\!  d_{1,m}^{I}  {\text d}  d_{1,m}^{I}	,i\!\!=\!\!1\!\!-\!\!p.
		\end{aligned}\right. }
\end{equation}

Observing \eqref{x_mean},  it is found that $ {\rm{E}} [ |D_{\text{sum}} |^2 ] \ge {\rm{E}} [ |D_1 |^2 ] + {\rm{E}} [ |D_2 |^2 ]$, since the direct signal $D_1$ and the partially reflected signals of $D_{2}$ can be superimposed in the same phase.
Since $p(\lambda'_{un},d^I_{n,m})$ is a monotonically decreasing function of $\lambda'_{un}$, the number of RISs in the typical voronoi cell decreases as $\lambda'_{un}$ increases.
In this case, only a small number of RISs are used for beamforming, thereby diminishing the role of RIS within an extremely dense UDN system.
In \eqref{I_mean}, it becomes apparent that ${\rm{E}} [ |I_{sum} |^2 ] ={\rm{E}}[ I_{1} |^2 ] +{\rm{E}} [ |I_{2} |^2 ] $.
It is noted that $I_1$ and $I_2$ are independent of each other, and both follow complex Gaussian distributions with zero mean for their small-scale fading.
Furthermore, it is found that the power of $I_2$ is proportionality to $Q \lambda_m$. 
These results will be demonstrated in the subsequent simulation section.

\subsection{Performance analysis}

In this subsection, we derive approximate closed formulas of the coverage probability, area spectral efficiency (ASE), and energy efficiency of the RIS-assisted UDN system.
To begin with, we consider the power consumption model of the RIS-assisted system, which is based on existing literatures \cite{Power_Consumption,SG_fine_jinshi}. 
Typically, a linear approximation model is widely used to describe the power consumption of BS $P_n$ and RIS $P_m$, expressed as $P_n=\Delta_p P_{tr} + P_{n_s}$ and $P_m = Q P_{m_d} + P_{m_s}$. 
Here, $P_{n_s}$ and $P_{m_s}$ represent the static power consumption of the BS and RIS, respectively.
The dynamic power consumption of the BS and RIS are respectively defined by $\Delta_p P_{tr}$ and $Q P_{m_d}$, where ${1}/{\Delta_p}$ represents the power amplifier efficiency, and $P_{m_d}$ denotes the average regulation power consumption per reflection element of the RIS.

\subsubsection{Coverage probability}
The coverage probability $\text{P}_c(\delta)$ is defined as the probability of the typical UE's SINR $\gamma_1$, which is larger than or equal to the threshold $\delta$, expressed as
	\begin{equation}
		{\small
		\text{P}_c(\delta)=1-\text{P}_o(\gamma_1 < \delta)=\text{P}(\gamma_1 \ge \delta),}
	\end{equation}%
where $\text{P}_o( \gamma_1 < \delta)$ represents the outage probability.

The power of the entire signal $D_{sum}$ can be given as
	\begin{equation}
		{\small
		\label{D_eq}
		\begin{aligned}
			&|D_{sum}|^2=|D_1+D_{2}|^2 \\
			\approx & \frac{|h_{1,1}|^2P_{tr}}{(1 + d^D_{1,1})^{\alpha}} +  \sum_{m=1}^M \frac{\beta^2_{1,m,1}|h^C_{1,m,1}|^2P_{tr}}{(1 + d^R_{m,1})^{\alpha} (1 + d^I_{1,m})^{\alpha} } + \epsilon_1 + \epsilon_2, 
		\end{aligned}}
	\end{equation}%
where $\epsilon_1$ and $\epsilon_2$ are modified factors to simplify the analysis processing, and are respectively expressed as
	\begin{subequations}
		{\small
		\begin{align}
			\epsilon_1 \! = \! & \! \! \sum_{m=1}^M \frac{ 2 \Gamma(\varsigma +1) \beta_{1,m,1}  \sigma_h |h^C_{1,m,1}|P_{tr}}{ \sqrt{\varsigma}\Gamma(\varsigma)(1 + d^D_{1,1})^{\frac{\alpha}{2}} (1 + d^R_{m,1})^{\frac{\alpha}{2}} (1 + d^I_{1,m})^{\frac{\alpha}{2}} } ,\\
			\epsilon_2 \! =\! &\! \!  \sum_{m=1}^M  \!\sum_{\substack{ m' \neq m  \\ m'=1}}^M  \!\!\!\frac{ 2 \beta_{1,m,1} \beta_{1,m',1} |h^C_{1,m,1} | | h^C_{1,m',1}| P_{tr}}{(\!1 \!\!+\!\! d^R_{m,1})^{\frac{\alpha}{2}} (\!1\! \!+\!\! d^I_{1,m})^{\frac{\alpha}{2}}  (\!1\!\! + \!\!d^R_{m',1})^{\frac{\alpha}{2}} (\!1 \!\!+\!\! d^I_{1,m'})^{\frac{\alpha}{2}}  }   .
		\end{align}}
	\end{subequations}

In the RIS-assisted UDN system, the approximate closed formula for coverage probability can be deduced as 
\begin{equation}
	{\small
	\label{coverage_hebing}
	\begin{aligned}
		 \text{P} \!\left( \gamma_1 \!\ge\! \delta \right)\!
		\overset{(a)}{\approx} & {\rm E}\Big[   \frac{\Gamma[ \varsigma + 1,\varsigma  \mathcal X ]}{\Gamma[\varsigma + 1]}   \Big]
		\!\overset{(b)}{\le} \! 1\! - \!{\rm E}\Big[  \Big( \!1- \!{\rm exp}\Big(  \!\!-\! \varpi \mathcal X   \Big) \Big)^\varsigma \Big]\\
		= & \sum_{k=1}^\varsigma C_\varsigma^k (-1)^{k+1} {\rm E}\left[  {\rm exp}\Big(  - k \varpi  \mathcal X   \Big)   \right],
\end{aligned}}
\end{equation}%
where $\mathcal X \!= \! \delta \Big(  \frac{ ( 1 \!+\! d^D_{1,1})^{\alpha} \sigma_n^2}{P_{tr}}+\sum\limits_{n'=2}^N \!\sum\limits_{m=1}^M \!\frac{  ( 1+ d^D_{1,1})^{\alpha}  \beta^2_{n',m,1}|h^C_{n',m,1}|^2 }{( 1 \!+\! d^I_{n',m})^{\alpha}( 1 \!+\! d^R_{m,1} )^{\alpha}} + \sum\limits_{n'=2}^N \!\!\frac{  ( 1\!+\! d^D_{1,1})^{\alpha}  |h_{n',1}|^2 }{(  1\!+\!d^D_{n',1} )^{\alpha}} 
   \Big) \!-\!  \sum\limits_{m=1}^M \!\!\frac{ ( 1\!+\! d^D_{1,1})^{\alpha}  \beta^2_{1,m,1} |h^C_{1,m,1}|^2   }{(1 \!+\! d^R_{m,1})^{\alpha} (1 + d^I_{1,m})^{\alpha}}
   - \frac{ ( 1\!+\! d^D_{1,1})^{\alpha} \epsilon_1}{P_{tr}} - \frac{ ( 1\!+\! d^D_{1,1})^{\alpha} \epsilon_2}{P_{tr}}  $  and $\varpi=\frac{ {\rm min} \{1, ((1+ \varsigma)!)^{-\frac{1}{\varsigma+1}} \} \varsigma}{1+ \varsigma }$.
   $(a)$ is derived by computing the probability using the Nakagami distribution for $h_{1,1} \sim Nakagami (\varsigma,1)$.
   $(b)$ is established using inequalities involving the incomplete gamma function \cite{GAMMA}.
   Furthermore, ${\rm E}\left[  {\rm exp}\Big(  - k \varpi \mathcal X   \Big)   \right]$ is expressed as
\begin{equation}
	{\small
	\begin{aligned}
		 & {\rm E}\left[  {\rm exp}\Big(  - k \varpi \mathcal X   \Big)   \right]\\
		\!\overset{(a)}{\approx}\!\! &
		\int_0^{\infty}{\rm exp}\Bigg(-  \frac{k \varpi(1 + d^D_{1,1})^4\delta\sigma_n^2}{P_{tr}} \Bigg)\\
		&\underbrace{{\rm exp}\Bigg( -2\pi\lambda_m \int_0^{\infty} (1 - \mathcal L_{\epsilon_1} \mathcal L_{D_2}  ) d^R_{m,1} {\text d} d^R_{m,1} \Bigg)}_{f_{D_2}(\lambda_m,Q) }\\
		&  \underbrace{{\rm exp}\Bigg( \!\!\!-\!4\pi^2\lambda_m\lambda'_{un} \!\!\int_0^{\infty}\!\!\!\!\! \int_0^{\infty} \!\!\!\!\!\!(1\!\!-\!\!\mathcal L_{I_2} \mathcal L_{\epsilon_2}) d^R_{m,1} d^D_{n',1} {\text d} d^R_{m,1} {\text d} d^{D}_{n',1} \!\Bigg)}_{f_{I_2,\epsilon_2} (\lambda_m, Q, \lambda'_{un})}\\
		&\underbrace{{\rm exp} \Bigg( -2\pi \lambda'_{un}\int_0^{\infty}(1 - \mathcal L_{I_1}) d^D_{n',1} {\text d} d^D_{n',1}  \Bigg)}_{f_{I_1}(\lambda'_{un})} f(d^D_{1,1}) {\text d}d^D_{1,1}
		,
\end{aligned}}
\end{equation}%
where $(a)$ is because of Lemma \ref{lemma_2}, and ${\mathcal L}_x $ denotes the Laplace transform of the signal $x$. The specific expressions and the proof are given in Appendix B. 

{\color{blue} It is been observed that $f_{D_2}(\lambda_m, Q)$ may satisfy $f_{D_2}(\lambda_m, Q) \le 1$, while both $f_{I_1}(\lambda'_{un})$ and $f_{I_2,\epsilon_2}(\lambda_m, Q, \lambda'_{un})$ are less than $1$.
Hence, when $\lambda'_{un}$ is small, appropriately increasing $\lambda_m$ and $Q$ can enhance the coverage. 
However, when $\lambda'_{un}$ is large, the impact of such adjustments may not yield the same results and even lead to a decrease in coverage.
This finding aligns with the conclusions obtained from the statistical signal analysis, where an increase in $\lambda'_{un}$ weakens the signal enhancement capability of the RIS. 
Additionally, the interfering signal power increases as $\lambda'_{un}$ grows.
Therefore, it is important to carefully design the relationship between $\lambda_m$, $Q$, and $\lambda'_{un}$ to optimize system performance.}

\subsubsection{Area Spectral Efficiency}
In general, ASE is widely used as an important performance metric of an UDN.
The ASE is defined as the sum rate per bandwidth per area, expressed as
\begin{equation}
	{\small
	\begin{aligned}
	\label{ASE_UDN}
	\eta_s = & \frac{1}{\pi R^2} \sum_{n=1}^{N} {\rm E} \left[  \rm{log}_2 \left(1+ \gamma_n \right) \right]
	=  \lambda'_{un} {\rm E} \left[\rm{log}_2 \left(1+ \gamma \right) \right].
	\end{aligned}}
\end{equation}

In the RIS-assisted UDN, the approximate closed formula of ASE can be deduced as 
	\begin{equation}
		{\small
		\label{ase_frist}
		\begin{aligned}
			\eta_s 
			\!\approx & \frac{\lambda'_{un}}{{\text ln}2}\int_0^{\infty} \!\!\! \int_0^{\infty}
			\!\!{\rm exp} \Bigg( -2\pi \lambda'_{un}\int(1 - \mathcal L_{I_1}(z) ) d^D_{n',1} {\text d} d^D_{n',1}  \Bigg)\\
			&\Bigg[ {\rm exp}\Bigg( \!\!-\! 4\pi^2\!\lambda_m\lambda'_{un} \!\!\! \iint \!\!(1\!\!-\!\!\mathcal L_{I_2}(z)) d^R_{m,1} d^D_{n',1} {\text d} d^R_{m,1} {\text d} d^{D}_{n',1} \!\Bigg) \\
			& -\!\mathcal L_{D_1}(z) 
			{\rm exp}\Bigg( \!\!-\!2\pi\lambda_m \int (1 \!- \!\mathcal L_{\epsilon_1}(z) \mathcal L_{D_2}(z)  ) d^R_{m,1} {\text d} d^R_{m,1} \Bigg) \\ 
			& {\rm exp}\Bigg( \!\!\!-\! 4\pi^2\!\lambda_m\lambda'_{un} \!\!\! \iint \!\!(1\!\!-\!\!\mathcal L_{I_2}(z) {\mathcal L}_{\epsilon_2}(z) ) d^R_{m,1} d^D_{n',1} {\text d} d^R_{m,1} {\text d} d^{D}_{n',1} \!\!\Bigg) \! \Bigg]		\\
			&	\frac{{\rm exp }(- z \sigma^2_n )}{z}  f(d^D_{1,1}) {\text d}d^D_{1,1} {\text d}z
		\end{aligned}}
	\end{equation}
where the specific expressions and the proof are given in Appendix C.


\subsubsection{Area Energy Efficiency}

The AEE is generally used to measure the energy consumption of a network, which is defined as the ratio of the ASE to the corresponding power consumption \cite{interference_RIS_intelligent,SG_fine_jinshi,AEE}, expressed as
	\begin{equation}
		{\small
		\label{eta_e}
		\eta_e = \frac{\eta_s}{ \lambda'_{un} (\Delta_p P_{tr} + P_{n_s} ) + \lambda_m (Q P_{m_d} +  P_{m_s})  },}
	\end{equation}%
where the unit scale of $\eta_e$ is the bit /Joule/unit area.



\subsubsection{Energy Coverage Efficiency}

To provide a more comprehensive analysis and description of the coverage reconfigurability in RIS-assisted systems, we introduce the concept of energy coverage efficiency (ECE). 
ECE is defined as the ratio of the coverage probability to the corresponding power consumption, with the unit scale of coverage area per Watts. 
Mathematically, it can be expressed as
\begin{equation}
	{\small
	\label{eta_c}
	\begin{aligned}
		\eta_c &= \frac{\pi R^2 \text{P} \left( \gamma_{1} \ge \delta \right)  }{N \bar{P}_n }= \frac{\text{P} \left( \gamma_{1} \ge \delta \right)  }{ \lambda'_{un} \bar{P}_n }\\
		 &=\frac{  \text{P} \left( \gamma_{1} \ge \delta \right)  }{ \lambda'_{un} ( \Delta_p P_{tr} + P_{n_s} ) + \lambda_m (Q P_{m_d} + P_{m_s}) },
	\end{aligned}}
\end{equation}%
where $ \bar{P}_n =  P_n + \frac{\lambda_m}{\lambda'_{un}} P_m$ is the normalization power consumption per BS.
The conception of ECE is used to descript the effect of RISs to a network.

\section{Simulation Results}
In this section, we compare the performance of the RIS-assisted UDN system with the classical UDN system.
{\color{blue}To ensure accuracy, we refer to relevant and authoritative hardware literature \cite{Power_Consumption,how_much} for the power consumption of the BS and the power consumption parameters of the RIS.
In this paper, we assume the ideal working condition of the RIS, ignoring the effect of reflection efficiency.
For a fair comparison, we set the power amplifier efficiency of the BS to an ideal value of $\Delta_p = 1$.
The noise power is set to be $\sigma_n^2 = N_0 B = 7.96 \times 10^{-14}$ W.
The path loss factor $\alpha$ is $4$. 
For small-scale fading, we consider two scenarios: $ \varsigma = 1$ (Rayleigh distribution) and $ \varsigma = 10$ (approximate Rice distribution) \cite{3Gpp}.
To compare the effect of RIS deployment with different parameters under the same power consumption, we consider two parameter control groups. One group is $\lambda_m=0.1, Q=10$ and $\lambda_m=0.05, Q=563$, while the other group is $\lambda_m=0.01, Q=10$ and $\lambda_m=0.005, Q=563$. }
For further details on the simulation conditions, please refer to Table \ref{simulation_parameters}.

\begin{table}[t]
    \begin{center}
	      \caption{Simulation conditions.}
    \begin{tabular}{p{0.25\textwidth}<{\centering} p{0.19\textwidth}<{\centering}}
	\toprule
	System parameters & Values  \\
	\midrule
	Transmit power $P_{tr}$ & $30$ dBm \cite{how_much} \\
	Static power consumption of BS $P_{n_s}$ & $14.7$ W \cite{how_much} \\
	Power amplifier efficiency $ {1}/{\Delta_p}$ & 1 \\
	Noise power spectral density $N_0$ & $-174$ dBm/Hz \\
	Bandwidth $B$ & $20$ MHz \\
	Path loss factor $\alpha$  & $4$ \\
	Small scale fading  $Nakagami(\varsigma,1)$  & $1$, $10$ \\
	Radius of the given area $R$ & $1000$ m \\
	Intensity of active BS $\lambda_{un}'$ & $ 0.001 \sim 10$ (${\rm m}^{\!-\!2}$) \\
	Intensity of RIS $\lambda_m$ &  $0.005$, $0.01$, $0.05$, $0.1$ (${\rm m}^{\!-\!2}$)\\
	Number of reflection elements $Q$ & $10$, $563$ \\
	Dynamic power consumption of reflection element $P_{m_d}$ & $12$ mW \cite{Power_Consumption}\\
	Static power consumption of RIS $P_{m_s}$ & $6.52$ W \cite{Power_Consumption} \\
	Communication threshold $\delta$ & $-30 \sim 50$ dB \\
	\bottomrule
	\label{simulation_parameters}
 \end{tabular}
\end{center}
\end{table}


Fig. \ref{tongji} illustrates the average power of signals and interference, with RIS intensities $\lambda_m =0.1$, $0.01$, $0.05$ {\color{blue}and} $0.005$, accompanied by corresponding RIS reflection element numbers of $Q=10$, $10$, $563$ and $563$.
As the active BS intensity $\lambda'_{un}$ increases, the average power of signals initially rises and then decreases. 
The reason is that, when the RIS intensity $\lambda_m$ remains constant, a higher active BS intensity leads to smaller voronoi cells, reducing beamforming efficiency of the RIS.
Taking $\lambda_m=0.005$ and $Q=563$ as an example, the average power of $D_{sum}$ reaches its maximum value at $\lambda'_{un}=1$. 
For $\lambda'_{un}\le 1$, we attribute the limitation to the active BS intensity $\lambda'_{un}$, while for $\lambda'_{un}\ge 1$, it is limited by the RIS intensity $\lambda_{m}$.
In the Nakagami channel, a larger  value of $\varsigma$ results in increased signal power. Comparing $\varsigma=1$ and $\varsigma=10$, there is a maximum channel gain difference of approximately $1.9$ dB.
With the same RIS power consumption, a larger number of reflection elements (represented by $Q$) can enhance the average power of $D_{sum}$ due to the higher beamforming gain. For instance, when $\lambda'_{un}=0.1$ and $\varsigma=10$, the case of $\lambda_m=0.005$ with $Q=563$ exhibits an approximately $24.7$ dB improvement compared to the case of $\lambda_m=0.01$ with $Q=10$. 
\begin{figure}[t]
    \centering
	\includegraphics[width=2.6in]{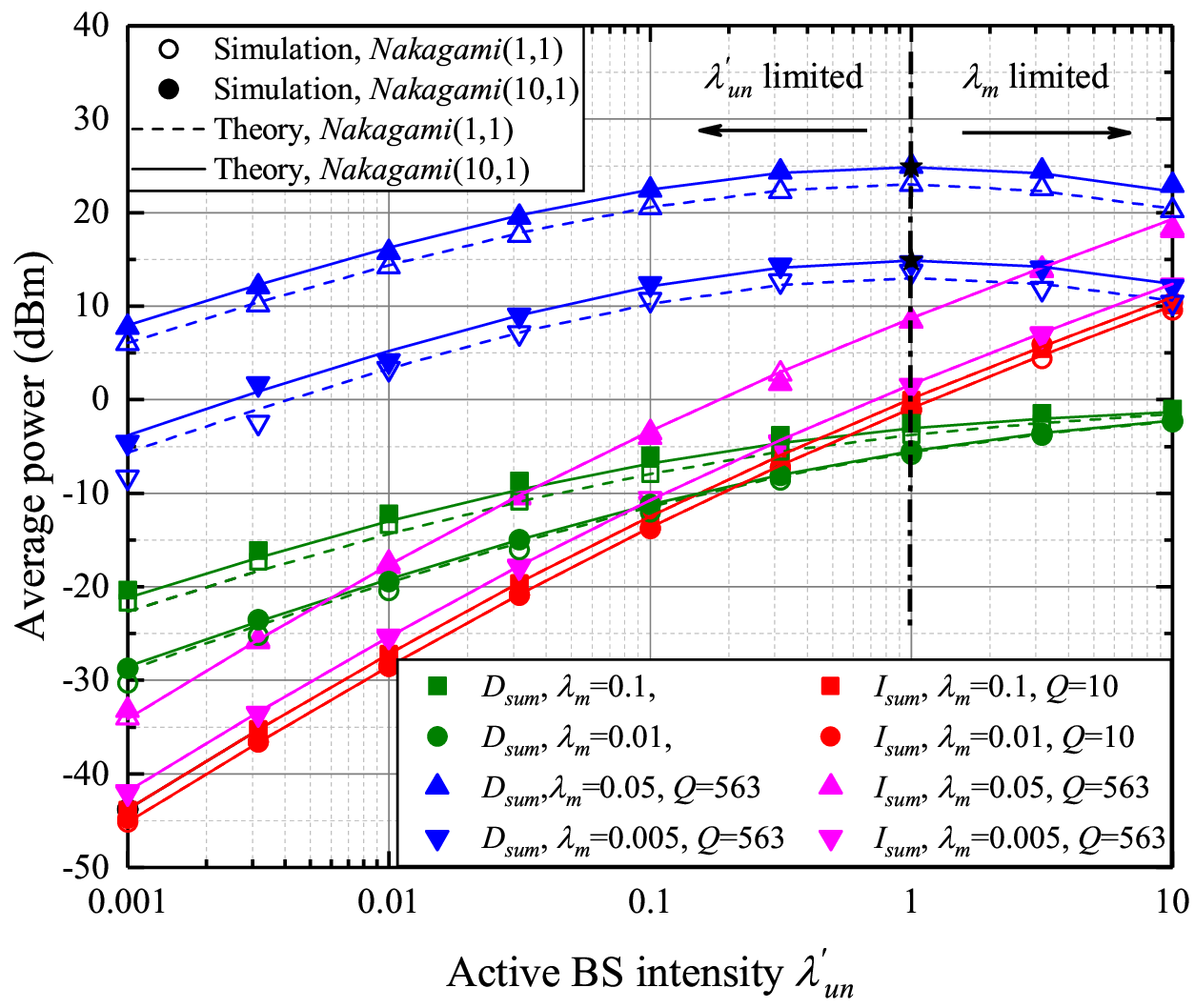}	
	\caption{The average power of signals and interference in the case of $\lambda_m=0.1$, $0.01$, $0.05$ and $0.005$.}
	\label{tongji}
\end{figure}


Fig. \ref{cdf} displays the outage probabilities of the typical UE, where the active BS intensity is set to be $\lambda'_{un}=0.01$.
In general, the UDN system with RIS exhibits improved outage probabilities for the typical UE compared to the classical UDN system. 
However, under a low communication threshold $\delta$, the UDN system with RIS performs worse than that of the classical UDN system. 
This can be attributed to the fact that the RIS simultaneously enhances both the signals and interference for cell-edge UEs with low SINR. 
We refer to this phenomenon as the Matthew's effect of the RIS-assisted UDN system.
With the same RIS power consumption, it is evident that the case of $\lambda_m=0.05$ and $Q=563$ provides significant assistance to cell-center UEs with high SINR, while the case of $\lambda_m=0.01$ and $Q=10$ offers better service for cell-edge UEs with low SINR.
For instance, when $\varsigma=10$ and $\gamma_1 = 30$ dB, the outage probability of the case of $\lambda_m=0.05$ and $Q=563$ is approximately $72.6 \%$ of the case with $\lambda_m=0.01$ and $Q=10$. 
On the other hand, when $\varsigma=10$ and $\gamma_1 = -10$ dB, the outage probability of the case of $\lambda_m=0.05$ and $Q=563$ is around $20.8$ times larger than the case of $\lambda_m=0.01$ and $Q=10$. 
The reason is that a larger number of reflection elements can significantly enhance the signals of the cell-center UEs with high SINR.
\begin{figure}[t]
    \centering
	\includegraphics[width=2.6in]{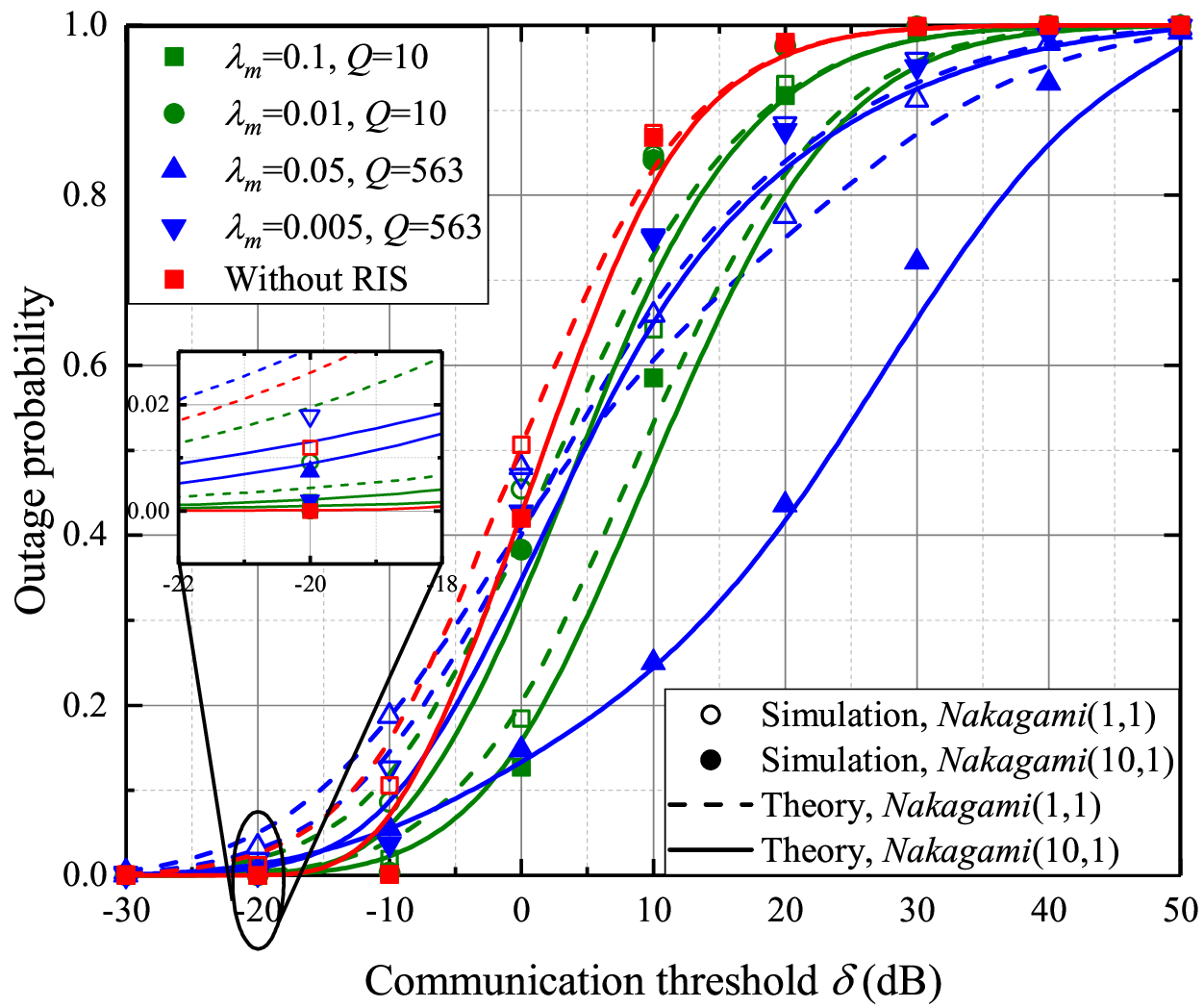}	
	\caption{The outage probability of the typical UE with $\lambda'_{un}=0.01$.}
	\label{cdf}
\end{figure}


The coverage probabilities of the RIS-assisted UDN and classical UDN are depicted in Fig. \ref{cover}, with RIS intensities $\lambda_m =0.1$, $0.01$, $0.05$, and $0.005$, and corresponding RIS reflection element numbers $Q=10$, $10$, $563$, and $563$.
As the active BS intensity $\lambda'_{un}$ increases, the coverage probability exhibits a decreasing trend due to the increased interference among BSs.
Overall, the RIS-assisted UDN system provides significantly better coverage probability compared to the case without RIS. 
For instance, when $\lambda'_{un}= 0.001$ and $\varsigma =1$, the coverage probability of the case of $\lambda_m=0.05$ and $Q=563$ is improved {\color{blue} by} about $45.1\%$  than the case without RIS.
When there are more reflection elements, the coverage probability experiences more rapid variations in response to changes in the active BS intensity $\lambda'_{un}$, as it may introduce more interference. 
For example, when $\lambda'_{un} = 0.001$ and $\varsigma=1$, the case of $\lambda_m=0.005$ and $Q=563$ performs about $18.1\%$ better than the case of $\lambda_m=0.01$ and $Q=10$. 
Conversely, when $\lambda'_{un} = 0.1$ and $\varsigma=1$, the case of $\lambda_m=0.005$ and $Q=563$ shows approximately $4.7\%$ worse coverage probability than the case of $\lambda_m=0.01$ and $Q=10$.
Moreover, in the Nakagami channel, a larger value of $\varsigma$ can improve channel gain and subsequently enhance coverage, particularly when $\lambda'_{un}$ is small. 
However, as $\lambda'_{un}$ increases and interference becomes more substantial, this effect diminishes and can even become counterproductive.
Additionally, the simulation results align well with the theoretical curves.
\begin{figure}[t]
    \centering
	\includegraphics[width=2.6in]{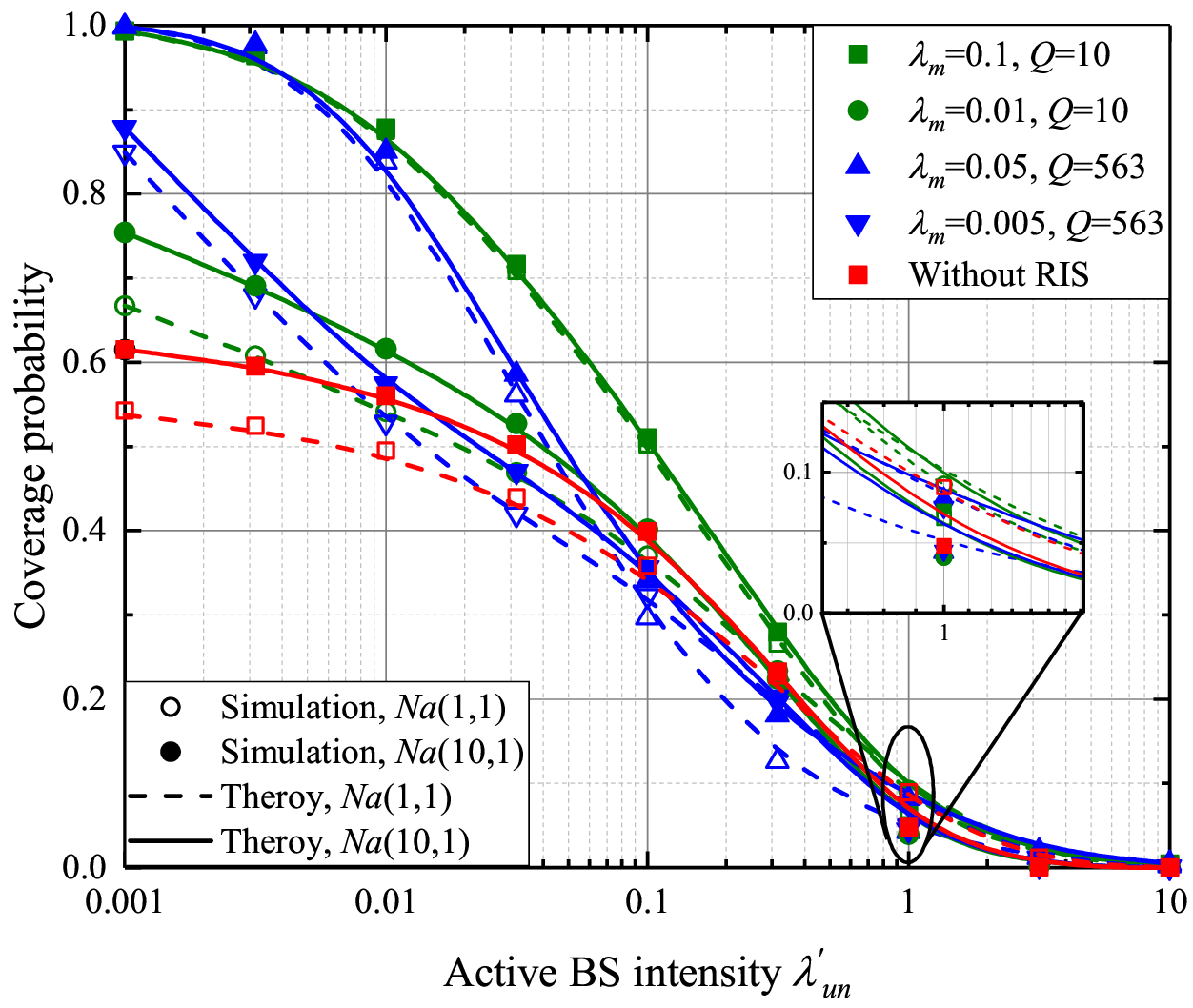}	
	\caption{The coverage probability of RIS-assisted UDN system, where $\lambda_m=0.1$, $0.01$, $0.05$ and $ 0.005$.}
	\label{cover}
\end{figure}


Fig. \ref{ase} illustrates the ASE performances of the RIS-assisted UDN and classical UDN systems, with RIS intensities $\lambda_m=0.1$, $0.01$, $0.05$, and $0.005$.
Compared to the classical UDN system, the RIS-assisted UDN system demonstrates higher ASE. 
For instance, when $\lambda'_{un}=0.001$, $\varsigma=1$, $\lambda_m=0.05$, and $Q=563$, the RIS-assisted UDN system achieves approximately $442.1\%$ higher ASE than the classical UDN system.
With the same RIS power consumption, the ASE performance follows a similar trend to the coverage probability. 
However, there is a performance inversion as the active BS intensity $\lambda'_{un}$ increases. 
For example, when $\lambda'_{un}=0.001$ and $\varsigma=1$, the case of $\lambda_m=0.05$ and $Q=563$ performs approximately $96.1\%$ better than the case of $\lambda_m=0.1$ and $Q=10$. 
However, when $\lambda'_{un}=10$ and $\varsigma=1$,, the case of $\lambda_m=0.05$ and $Q=563$ exhibits around $12.3\%$ worse ASE than the case of $\lambda_m=0.1$ and $Q=10$.
The simulated results align well with the theoretical results.
\begin{figure}[t]
    \centering
	\includegraphics[width=2.6in]{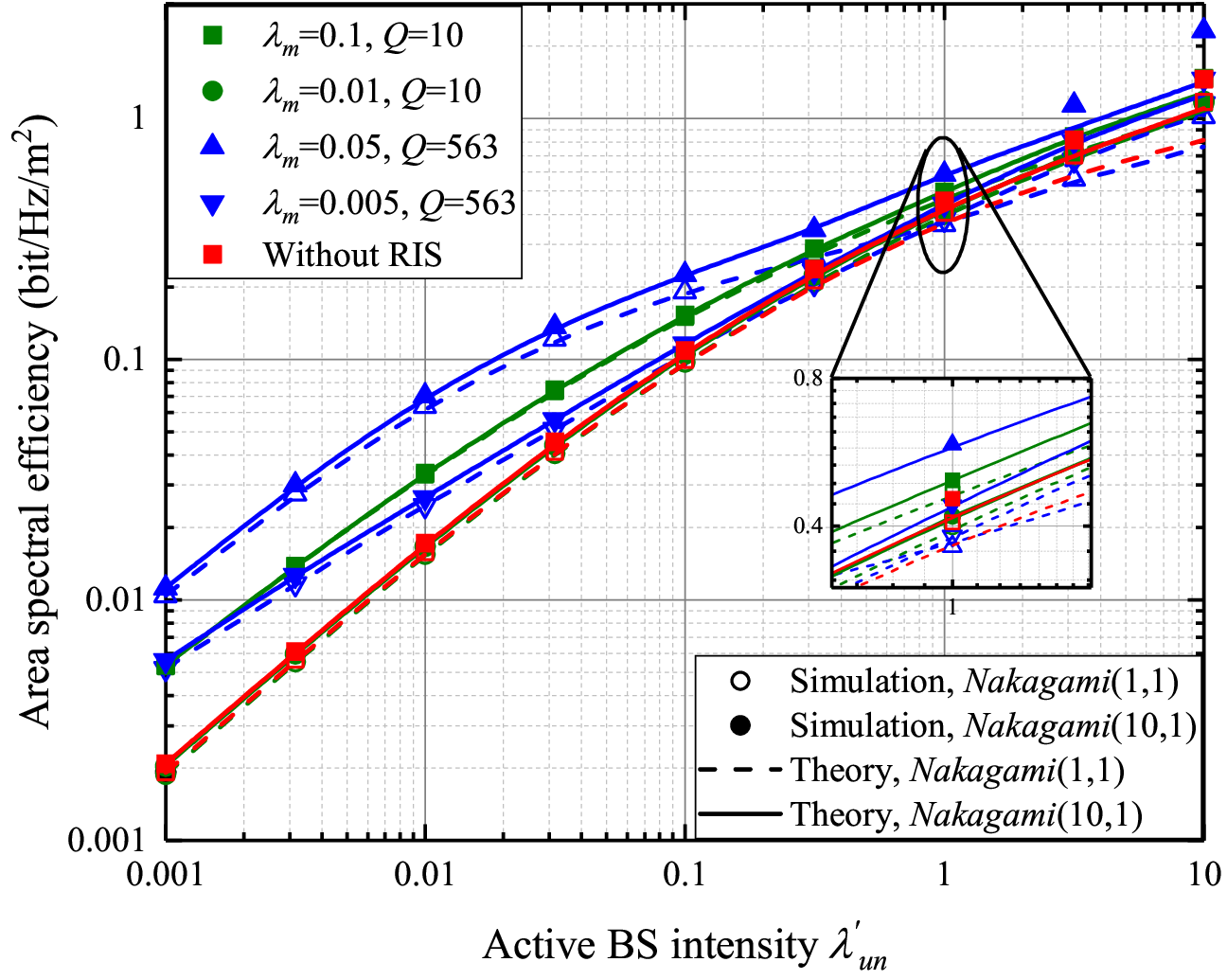}	
	\caption{The ASE of RIS-assisted UDN system with $\lambda_m=0.1$, $0.01$, $0.05$  and $0.005$.}
	\label{ase}
\end{figure}


The AEE performances of the RIS-assisted UDN system and the classical UDN system are illustrated in Fig. \ref{AEE1}, with the RIS intensities $\lambda_m=0.1$, $0.01$, $0.05$, and $0.005$.
For a given active BS intensity $\lambda'_{un}$, it is always possible to find an optimal RIS scheme that outperforms the classical UDN system. 
As the active BS intensity $\lambda'_{un}$ increases, the AEE performance of the RIS-assisted UDN system exhibits an extreme value due to the significant energy consumption from the RISs, which increases linearly. 
When $\lambda'_{un}$ is larger, the energy consumption mainly comes from the BSs; on contrast, when $\lambda'_{un}$ is smaller, the energy consumption comes mainly from the RISs.
Taking $\lambda_m=0.05$ and $Q=563$ as an example, when $\lambda'_{un}=0.1$ and $\varsigma=1$, the AEE performance of the RIS-assisted UDN system is approximately $32.9\%$ better than the classical UDN system. 
However, when $\lambda'_{un}=0.01$ and $\varsigma=10$, the RIS-assisted UDN system performs around $28.5\%$ worse than the classical UDN system.
Therefore, the configuration of the RIS should be optimized to match the active BS intensity, achieving a balance between energy consumption and spectral efficiency.
\begin{figure}[t]
    \centering
	\includegraphics[width=2.6in]{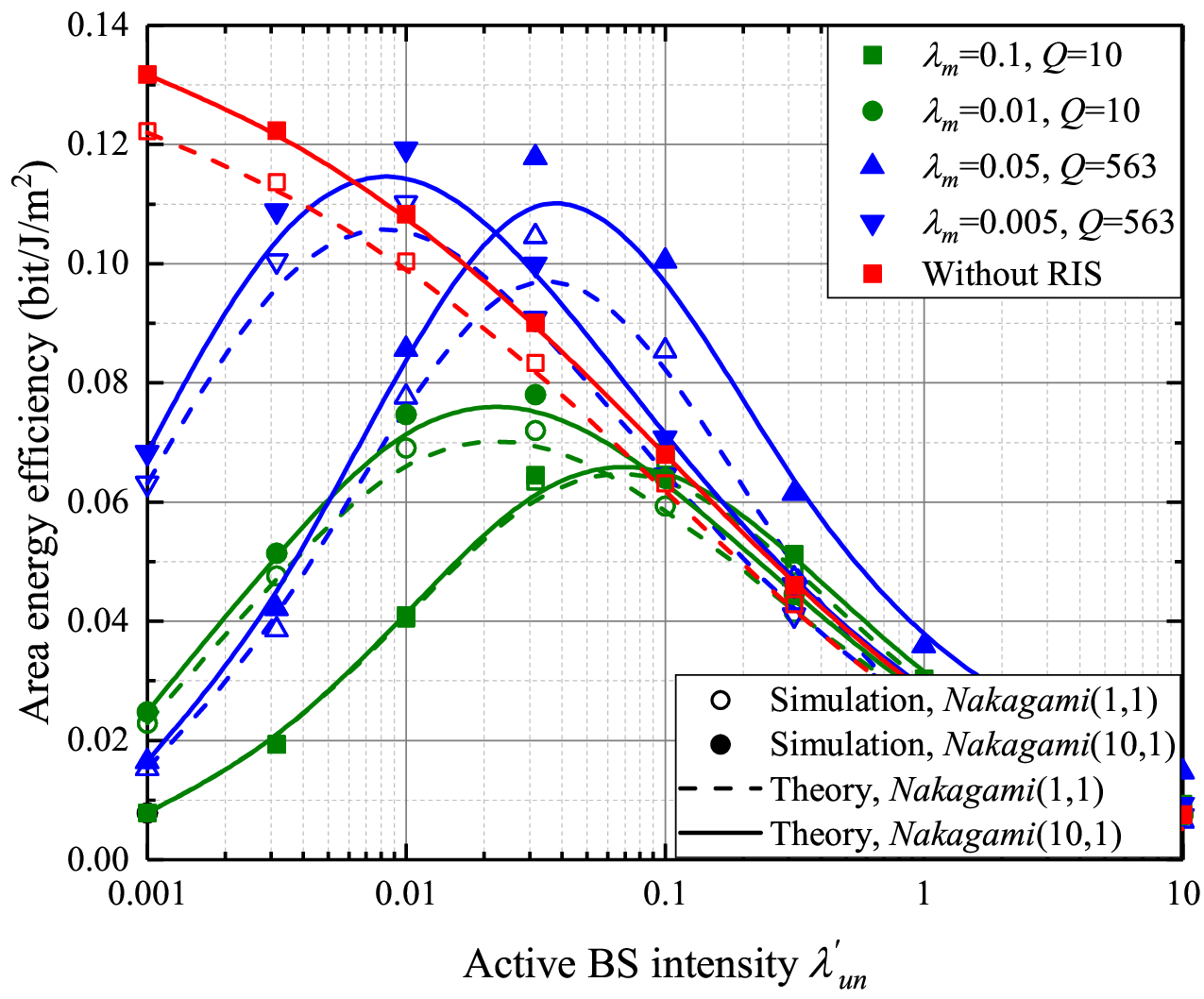}	
	\caption{The AEE of RIS-assisted UDN system with $\lambda_m=0.1$, $0.01$, $0.05$ and $ 0.005$.}
	\label{AEE1}
\end{figure}


The ECE performance of the RIS-assisted UDN system and classical UDN systems is illustrated in Fig. \ref{ECE}, with the RIS intensities $\lambda_m=0.1$, $0.01$, $0.05$, and $0.005$.
In general, the ECE decreases as the intensity of active BSs $\lambda'_{un}$ increases, due to the reduced coverage area of a dense network.
Furthermore, the ECE of the RIS-assisted UDN system exhibits greater robustness compared to the classical UDN system. 
For instance, when $\varsigma=1$, transitioning from $\lambda'_{un}=0.01$ to $\lambda'_{un} = 3.16$, the classical UDN system experiences a change of approximately $44.9$ dB, whereas the RIS system with $\lambda_m=0.1$ and $Q=10$ only undergoes a change of about $35.9$ dB. 
This is attributed to the RIS's ability to enhance the signal effect, thereby mitigating the loss in coverage area.
\begin{figure}[t]
    \centering
	\includegraphics[width=2.6in]{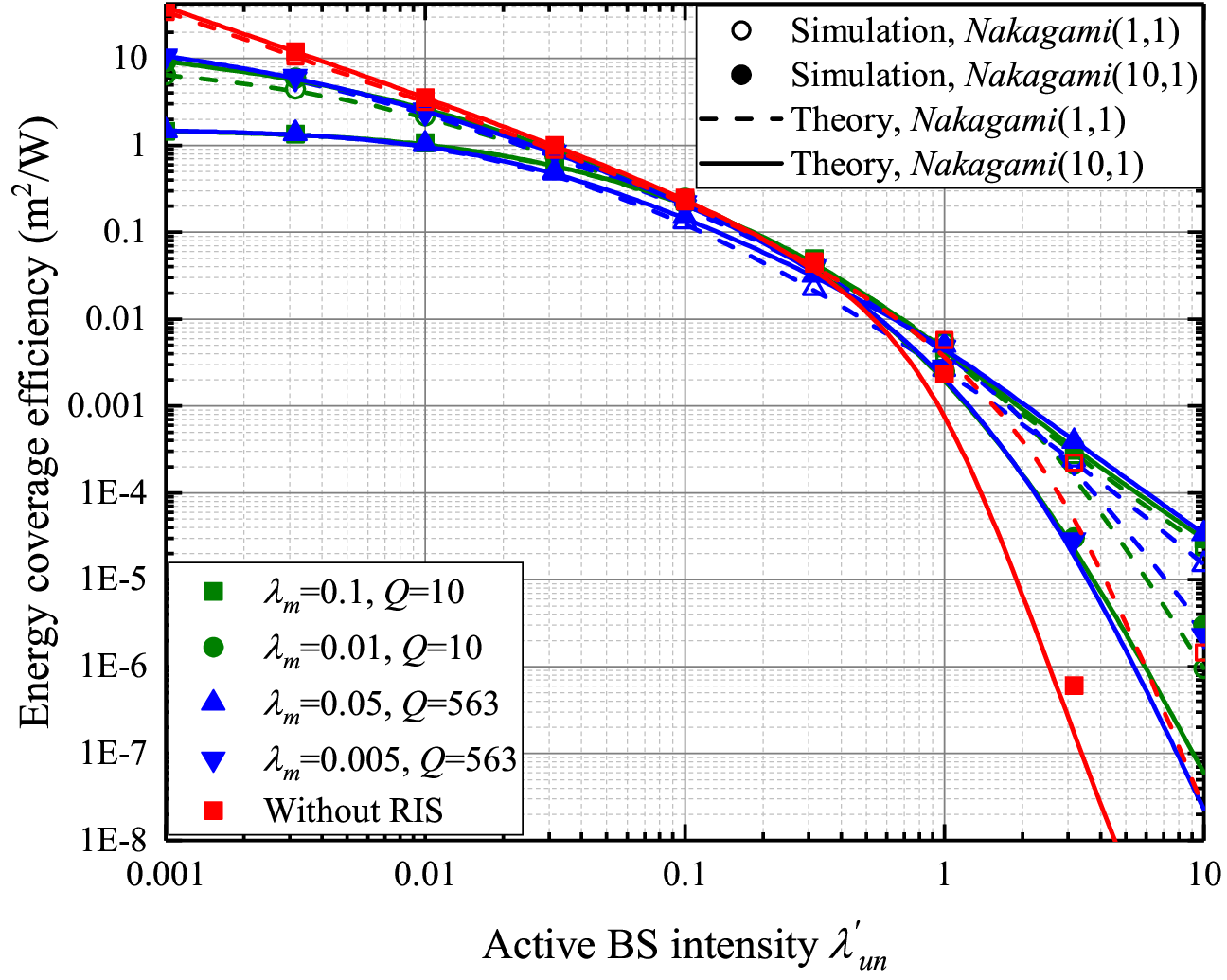}	
	\caption{The ECE of the RIS-assisted UDN system with $\lambda_m=0.1$, $0.01$, $0.05$ and $0.005$.}
	\label{ECE}
\end{figure}

{\color{blue}
\section{Conclusion}
In this paper, we present the ternary stochastic geometry theory to analyze the coupling and collaboration among multiple nodes in complex network systems, e.g.,  RIS-assisted UDN systems.
We first propose a dual coordinate analysis method which facilitates the analysis of interrelationships among multiple point processes.
Based on the geometric properties of the typical $\triangle {\rm BRU}$, we examine correlations and derive PDFs for correlation edges and correlation angles. Additionally, we investigate the distribution of RIS reflection states.
Furthermore, we introduce the Campbell's theorem and approximate PGFL analysis to compute the random sum and random product of the ternary stochastic geometry.
We conduct a comprehensive analysis of the system performances of RIS-assisted UDN, including signal statistical characteristics, coverage probability, ASE, AEE, and ECE.
Simulation results show that the RIS-assisted UDN system can be theoretically designed to outperform the classical UDN system in all performance aspects.
Random deployment of RIS exhibits the Matthew effect, improving the SINR for the cell center UEs while potentially weakening the SINR for the cell edge UEs.
Moreover, for the same RIS power consumption, a deployment scheme with higher RIS strength is more suitable for dense UDN systems, while a scheme with a higher number of RIS reflection elements is more suitable for sparse UDN systems. 
This finding motivates further research on the specific relationship between communication requirements and RIS configuration.
}
\begin{appendices}

	\section{}

	Recalling Section II and III, 
	the reflected signals $D_2$ is further given as
	\begin{small}
		\begin{equation}
			\begin{aligned}
				D_2
				\!= \! &\!\!\!\! \underbrace{\sum_{m \in \Psi^M_{1}}  \!\!\!  \frac{h_{1,m,1}   \beta_{1,m, 1} s_1 }{ {  (1 \!\!+\! \!d_{m,1}^{R} )^{{\frac{\alpha}{2}}} (1\!\!+\!\! d_{1,m}^{I} )^{{\frac{\alpha}{2}}} }} }_{ D_{2,\Psi^M_{1}} }
				\!+\! \!\!\!  \!\underbrace{ \sum_{m \in \widetilde{  \Psi}^M_{1}}\!\!\!   \frac{h_{1,m,1}   \beta_{1,m, 1} s_1 }{ {  (1 \!\!+\!\! d_{m,1}^{R} )^{{\frac{\alpha}{2}}}  (1\!\!+ \!\!d_{1,m}^{I} )^{{\frac{\alpha}{2}}} }} }_{ D_{2,\widetilde{  \Psi}^M_{1}}} ,
			\end{aligned}
		\end{equation}
	\end{small}%
	where $D_{2,\Psi^M_{1}} $ and $ D_{2,\widetilde{  \Psi}^M_{1}} $ respectively represent the reflected signals from the RISs belonging to the typical BS, and the rest RISs.
	
	\setcounter{equation}{28}
	\begin{figure*}[t]
		\begin{footnotesize}
			\begin{equation}
				\begin{aligned}
					 {\rm E}\left[  {\rm exp}\Big(  - k \varpi \mathcal X   \Big)   \right]
					= & 
					{\rm E} \! \Bigg[ {\rm {exp}} \left(\!\! -\! \frac{ k \varpi( 1\!+\! d^D_{1,1})^{\alpha} \delta \sigma_n^2}{ P_{tr}}   \!\right)  \!\!\prod_{n'=2}^N  \!\! \underbrace{{\mathcal L}_{I_1}\!\! \left( \!\frac{ k \varpi(1 \!+ \!d^D_{1,1})^{\alpha} \delta }{ (  1\!+\!d^D_{n',1} )^{\alpha}} \!\right)}_{\mathcal L_{I_1}}  
					\!\prod_{m=1}^M \! \underbrace{{\mathcal L}_{\epsilon_1}\! \left(  \frac{ - k \varpi 2 \Gamma( \varsigma+1) \beta_{1,m,1} (1 + d^D_{1,1})^{\frac{\alpha}{2}}  }{ \sqrt{\varsigma}\Gamma(\varsigma) (1 + d^R_{m,1})^{\frac{\alpha}{2}} (1 + d^I_{1,m})^{\frac{\alpha}{2}} }  \right)}_{\mathcal L_{\epsilon_1}}  \\ 
					&\prod_{m=1}^{M} \underbrace{{\mathcal L}_{D_2} \left( \frac{ -k \varpi \beta^2_{1,m,1} (1 + d^D_{1,1})^{\alpha}  }{  (1 + d^R_{m,1})^{\alpha} (1 + d^I_{1,m})^{\alpha}} \right)}_{\mathcal L_{D_2}}
					\prod_{m\! =\! 1}^M \! \prod_{\substack{m' \! \neq\!  m,\\ m'\! =\! 1} }^M \! \!\! \! \underbrace{{\mathcal L}_{\epsilon_2}\! \! \left(\!\! \frac{ -k \varpi (1 \! \!+\!\!  d^D_{1,1})^{\alpha} \beta_{1,m,1} \beta_{1,m',1} }{ (1 \! \!+ \!\! d^R_{m,1})^{\frac{\alpha}{2}} (1 \! \!+\!\!  d^I_{1,m})^{\frac{\alpha}{2}}  (1\! \! +\!\!  d^R_{m',1})^{\frac{\alpha}{2}} (1 \!\! + \!\! d^I_{1,m'})^{\frac{\alpha}{2}} }  \! \!\right)}_{\mathcal L_{\epsilon_2}} \\
					&   \prod_{n'=2}^N \prod_{m=1}^M \underbrace{{\mathcal L}_{I_2} \left( \frac{ k \varpi \beta^2_{n',m,1} (1 + d^D_{1,1})^{\alpha} \delta  }{ ( 1 + d^I_{n',m})^{\alpha}( 1 + d^R_{m,1} )^{\alpha}} \right)}_{\mathcal L_{I_2}}
					 \Bigg] \\
					\!\overset{(a)}{ \approx}\!\! &
				\int_0^{\infty}{\rm exp}\Bigg(-  \frac{k \varpi(1 + d^D_{1,1})^{\alpha}\delta\sigma_n^2}{P_{tr}} \Bigg)
				{\rm exp}\Bigg( \!\!\!-\!{\alpha}\pi^2\!\lambda_m\lambda'_{un} \!\!\int_0^{\infty}\!\!\!\!\! \int_0^{\infty} \!\!\!\!\!\!(1\!\!-\!\!\mathcal L_{I_2} \mathcal L_{\epsilon_2}) d^R_{m,1} d^D_{n',1} {\text d} d^R_{m,1} {\text d} d^{D}_{n',1} \!\Bigg)\\
				  &
				  {\rm exp}\Bigg( -2\pi\lambda_m \int_0^{\infty} (1 - \mathcal L_{\epsilon_1} \mathcal L_{D_2}  ) d^R_{m,1} {\text d} d^R_{m,1} \Bigg)
				{\rm exp} \Bigg( -2\pi \lambda'_{un}\int_0^{\infty}(1 - \mathcal L_{I_1}) d^D_{n',1} {\text d} d^D_{n',1}  \Bigg) f(d^D_{1,1}) {\text d}d^D_{1,1},
			\end{aligned}\label{cover_e}
			\end{equation}
		\end{footnotesize}
	\end{figure*}%
	\begin{figure*}[t]
		\begin{footnotesize}
			\begin{subequations}
				\begin{align}
					{\mathcal M}_I(z) \notag
					= 
					&  {\rm E}\Bigg[  \prod_{n'=2}^N \prod_{m=1}^M \underbrace{{\mathcal L}_{I_2} \left( \frac{ z  \beta^2_{n',m,1} P_{tr} }{ ( 1 + d^I_{n',m})^ \alpha( 1 + d^R_{m,1} )^ \alpha} \right) }_{\mathcal L_{I_2} (z)}   
					    \prod_{n'=2}^N \underbrace{{\mathcal L}_{I_1}  \left( \frac{ z P_{tr}}{ (  1+d^D_{n',1} )^ \alpha} \right)}_{\mathcal L_{I_1}(z)}    \Bigg]  \tag{31a} \label{MI}\\
					= 
					& \int_0^{\infty} \!\!{\rm exp}\Bigg( \!\!-\! 4\pi^2\!\lambda_m\lambda'_{un} \!\!\! \iint \!\!(1\!\!-\!\!\mathcal L_{I_2}(z)) d^R_{m,1} d^D_{n',1} {\text d} d^R_{m,1} {\text d} d^{D}_{n',1} \!\Bigg) \notag 
					{\rm exp} \Bigg( -2\pi \lambda'_{un}\int(1 - \mathcal L_{I_1}(z) ) d^D_{n',1} {\text d} d^D_{n',1}  \Bigg) f(d^D_{1,1}) {\text d}d^D_{1,1},  \notag \\
					{\mathcal M}_{I + D}(z) \notag 
				   \approx & 
				   {\rm E} \! \Bigg[ \underbrace{{\mathcal L}_{D_1}\!\!\left(\!\frac{-zP_{tr}}{(1 \!+\! d^D_{1,1})^ \alpha}\!\right) }_{\mathcal L_{D_1}(z)} \!\prod_{n'=2}^N  \!\! \underbrace{{\mathcal L}_{I_1}\!\! \left( \!\frac{ z P_{tr} }{ (  1\!+\!d^D_{n',1} )^ \alpha} \!\right)}_{\mathcal L_{I_1}(z)}  
				   \!\prod_{m=1}^M \! \underbrace{{\mathcal L}_{\epsilon_1}\! \left(  \frac{   z 2 \Gamma(\varsigma+1) \beta_{1,m,1} P_{tr}  }{ \sqrt{\varsigma}\Gamma(\varsigma) (1 + d^D_{1,1})^{\frac{\alpha}{2}}(1 + d^R_{m,1})^{\frac{\alpha}{2}} (1 + d^I_{1,m})^{\frac{\alpha}{2}} }  \right)}_{\mathcal L_{\epsilon_1}(z)}   \tag{31b} \label{MID} \\ 
				   &\prod_{m=1}^{M} \underbrace{{\mathcal L}_{D_2} \left( \frac{ z\beta^2_{1,m,1} P_{tr}  }{  (1 + d^R_{m,1})^ \alpha (1 + d^I_{1,m})^ \alpha} \right)}_{\mathcal L_{D_2}(z)} \notag
				   \prod_{m\! =\! 1}^M \! \prod_{\substack{m' \! \neq\!  m,\\ m'\! =\! 1} }^M \! \!\! \! \underbrace{{\mathcal L}_{\epsilon_2}\! \! \left(\!\! \frac{ z\beta_{1,m,1} \beta_{1,m',1}  P_{tr} }{ (1 \! \!+ \!\! d^R_{m,1})^{\frac{\alpha}{2}} (1 \! \!+\!\!  d^I_{1,m})^{\frac{\alpha}{2}}  (1\! \! +\!\!  d^R_{m',1})^{\frac{\alpha}{2}} (1 \!\! + \!\! d^I_{1,m'})^{\frac{\alpha}{2}} }  \! \!\right)}_{\mathcal L_{\epsilon_2}(z)} \notag \\
				      & \prod_{n'=2}^N \prod_{m=1}^M \underbrace{{\mathcal L}_{I_2} \left( \frac{ z\beta^2_{n',m,1} P_{tr}  }{ ( 1 + d^I_{n',m})^ \alpha( 1 + d^R_{m,1} )^ \alpha} \right)}_{\mathcal L_{I_2}(z)}
					\Bigg] \notag \\
				   = & \int_0^{\infty} \!\! \mathcal L_{D_1}(z) {\rm exp}\Bigg( -2\pi\lambda_m \int (1 - \mathcal L_{\epsilon_1}(z) \mathcal L_{D_2}(z)  ) d^R_{m,1} {\text d} d^R_{m,1} \Bigg) 
				   {\rm exp} \Bigg( -2\pi \lambda'_{un}\int(1 - \mathcal L_{I_1}(z)) d^D_{n',1} {\text d} d^D_{n',1}  \Bigg)
				     \notag \\
				   & {\rm exp}\Bigg( \!\!-\! 4\pi^2\!\lambda_m\lambda'_{un} \!\!\! \iint \!\!(1\!\!-\!\!\mathcal L_{I_2}(z) {\mathcal L}_{\epsilon_2}(z) ) d^R_{m,1} d^D_{n',1} {\text d} d^R_{m,1} {\text d} d^{D}_{n',1} \!\Bigg) f(d^D_{1,1}) {\text d}d^D_{1,1} .\notag
								\end{align}
			\end{subequations}
			\end{footnotesize}
			\hrulefill
	\end{figure*}

	As aforementioned, the expectation of $D_{2,\Psi^M_{1}} $ and $D_{2,\widetilde{  \Psi}^M_{1}}$ are respectively expressed as
	\setcounter{equation}{26}
	\begin{small}
		\begin{subequations}
			\begin{align}
				{\rm E}  \! \left[ | D_{2,\Psi^M_{1}} |^2 \right]
				\!  \! \overset{(a)}{=} \! &  \left[ \int_0^{\infty} \left(    1  -  \frac{\Gamma^4( \varsigma +\frac{1}{2})}{\varsigma^2 \Gamma^4(\varsigma)}  +  \frac{Q\Gamma^4( \varsigma+\frac{1}{2})}{\varsigma^2 \Gamma^4(\varsigma)}    \right) \right. \notag \\
				& \cdot  2 \pi \lambda_m     Q P_{tr}  \mathcal Q_{p}(\alpha, d^D_{1,1}) +   4 \pi^2 \lambda_m^2  Q^2  P_{tr}  \\
				&  \left.  \cdot \frac{ \Gamma^4(\varsigma + \frac{1}{2}) }{\varsigma^2 \Gamma^4(\varsigma)}    \mathcal Q^2_{p}(\frac{\alpha}{2}, d^D_{1,1}) \right]  f(d^D_{1,1}) {\text d} d^D_{1,1}  ,\notag
				\label{D_2_beam}\\
				{\rm E} \left[ | D_{2,\widetilde{\Psi}^M_{1}} |^2 \right]
				\!  \!  \overset{(b)}{=}  &  
				  2 \pi \lambda_m Q  P_{tr} \!\!\int_0^{\infty} \!\!\!\! {\mathcal Q}_{1-p}(\alpha, d^D_{1,1}) f(d^D_{1,1}) {\text d} d^D_{1,1} ,
			\end{align}
		\end{subequations}
	\end{small}%
	where $(a)$ and $(b)$ follows the Campbell's theorem in ternary stochastic geometry.
	Thereafter,  we deduce the average power of the entire signal $D_{sum}$, as shown in \eqref{x_mean}. 
	
	Recall Section II and III, the average power of interference $I_1$ and $I_2$ are respectively expressed as
	\begin{small}
		\begin{subequations}
			\begin{align}
				{\rm{E}} \!\left[ \left|I_1 \right|^2 \right] \!\!\!
				\overset{(a)}{=} & 2 \pi      \lambda_{un}' P_{tr}  \left( \frac{1}{6} - \mathcal{F}\left( \alpha, \pi \lambda_{un}' \right) \right), \\
				{\rm{E}} \!\left[ \left|I_2 \right|^2 \right]\!\!\!
			\overset{(b)}{=} &    Q P_{tr} {\rm{E}} \!\!\left[  \! \sum_{ \substack{ n'=2}}^{N}  \!\! \frac{2 \pi \lambda_m  \int_0^{\infty} \!\! {\mathcal G}\!\!\left( \!\! \frac{1}{(1\!+\!d^R_{m,1})^{\alpha} } \!\!\right) d_{m,1}^{R} {\text d} d_{m,1}^{R}}{(1 + d_{n',m}^{I}  )^{  \alpha}}  \right] \\
			= & 4 \pi^2 \lambda_{un}' \lambda_m  Q  P_{tr}  \! \!\! \iint \!\!\! \mathcal Q_{1}(\alpha, d^D_{1,1}) d_{n',1}^{D} {\text d} d_{n',1}^{D} f ( d_{1,1}^D ) {\text d}  d_{1,1}^D, \notag
			\end{align} 
		\end{subequations}
		\end{small}%
	where $(a)$ and $(b)$ both follow the Campbell's theorem. 
	Finally, the average power of the entire interference $I_{\rm sum}$ is deduced as \eqref{I_mean}.

	\section{}

	According to \eqref{coverage_hebing}, ${\rm E}\left[  {\rm exp}\Big(  - k \varpi \mathcal X   \Big)   \right]$ is expressed as \eqref{cover_e}, where $(a)$ is based on Lemma \ref{lemma_2} and $\beta_{n,m,u}$ can be integrated using ${\mathcal{G}}(x) \triangleq \int_0^{\pi} \frac{{\rm P}(\beta_{n,m,1} = 1 | \Delta \theta^M_{n,m,1} ) }{\pi} x {\text d} \Delta \theta$.

	\section{} 
	The ASE is defined as the sum rate per bandwidth per area, expressed as
	\setcounter{equation}{29}
	\begin{small}
	\begin{equation}
		\begin{aligned}
		\label{ASE_UDN_Appendix}
		\eta_s 
		=    \frac{\lambda'_{un}  }{{\text ln}2} \int_{0}^{\infty}[{\mathcal M}_I(z) - {\mathcal M}_{I+D}(z) ]\frac{{\rm exp}({-z\sigma_n^2 }) }{z} {\text d} z,
		\end{aligned}
	\end{equation}
	\end{small}
	where ${\mathcal M}_x(z)= {\rm E} \left[  e^{-zx} \right]$ is the moment generation function (MGF) of $x$. 
	According to {Lemma 2}, ${\mathcal M}_I(z)$ and ${\mathcal M}_{I+D}(z) $ are given as \eqref{MI} and \eqref{MID}, respectively.
	Thus, the ASE of the RIS-assisted system is shown in \eqref{ase_frist}.

	\end{appendices}

\end{document}